\newcommand{\bea}{\begin{eqnarray}}
\newcommand{\eea}{\end{eqnarray}}
\newcommand{\beq}{\begin{equation}}
\newcommand{\eeq}{\end{equation}}
\newcommand\wt{\widetilde}
\newtheorem{theorem}{Theorem}[section]
\newtheorem{conjecture}[theorem]{Conjecture}
\newtheorem{proposition}[theorem]{Proposition}
\theoremstyle{remark}
\newtheorem{remark}[theorem]{Remark}
\numberwithin{equation}{section}
\numberwithin{theorem}{section}
\newcommand\cK{\mathcal{K}}
\newcommand\cL{\mathcal{L}}
\newcommand\cP{\mathcal{P}}
\newcommand{\R}{\ensuremath{\mathbb R}}
\def\CT{{CT}}
\def\wt#1{\widetilde{#1}}
\newcommand{\CC}{{\mathbb C}}
\newcommand{\RR}{{\mathbb R}}
\newcommand{\TT}{{\mathbb T}}
\newcommand{\ZZ}{{\mathbb Z}}
\newcommand{\abZ}{\underset{ab}{{\mathbb Z}^2}}
\newcommand{\bcZ}{\underset{bc}{{\mathbb Z}^2}}
\newcommand{\acZ}{\underset{ac}{{\mathbb Z}^2}}
\newcommand{\aZ}{\underset{a}{\mathbb Z}}
\newcommand{\bZ}{\underset{b}{\mathbb Z}}
\newcommand{\cZ}{\underset{c}{\mathbb Z}}
\newcommand{\dZ}{\underset{d}{\mathbb Z}}
\newcommand{\abT}{\underset{ab}{{\mathbb T}^2}}
\newcommand{\bcT}{\underset{bc}{{\mathbb T}^2}}
\newcommand{\acT}{\underset{ac}{{\mathbb T}^2}}
\newcommand{\aT}{\underset{a}{\mathbb T}}
\newcommand{\bT}{\underset{b}{\mathbb T}}
\newcommand{\cT}{\underset{c}{\mathbb T}}
\newcommand{\dT}{\underset{d}{\mathbb T}}
\newcommand{\heisZ}{\mathrm{Heis}^\ZZ}
\newcommand{\bcR}{\underset{bc}{{\mathbb R}^2}}
\newcommand{\acR}{\underset{ac}{{\mathbb R}^2}}
\newcommand{\aR}{\underset{a}{\mathbb R}}
\newcommand{\bR}{\underset{b}{\mathbb R}}
\newcommand{\cR}{\underset{c}{\mathbb R}}
\newcommand{\heisR}{\mathrm{Heis}^\RR}
\newcommand{\nil}{\mathrm{Nil}}
\begin{document}

\title[T-duality trivializes bulk-boundary correspondence]
{T-duality trivializes bulk-boundary correspondence: the parametrised case}

\author[KC Hannabuss]{Keith C. Hannabuss}

\address[Keith Hannabuss]{
Mathematical Institute, 
24-29 St. Giles', 
Oxford, OX1 3LB, and 
Balliol College, 
Oxford, OX1 3BJ,
U.K.} 

\email{kch@balliol.ox.ac.uk}

\author[V Mathai]{Varghese Mathai}

\address[Varghese Mathai]{
Department of Pure Mathematics,
School of  Mathematical Sciences, 
University of Adelaide, 
Adelaide, SA 5005, 
Australia}

\email{mathai.varghese@adelaide.edu.au}

\author[G.C.Thiang]{Guo Chuan Thiang}

\address[Guo Chuan Thiang]{
Department of Pure Mathematics,
School of  Mathematical Sciences, 
University of Adelaide, 
Adelaide, SA 5005, 
Australia}

\email{guo.thiang@adelaide.edu.au}

\begin{abstract}
We state a general conjecture that T-duality trivialises a model for the bulk-boundary correspondence in the parametrised context. We give evidence that it is valid by proving it in a special interesting case, which is relevant both to String Theory and to the study of topological insulators with defects in Condensed Matter Physics.
\end{abstract}

\maketitle


\section{Introduction}

\label{sec:sketch} 
Recently the last two authors introduced T-duality in the study of topological insulators \cite{MT}. As an application, it was shown in \cite{MT1} that T-duality trivializes the \emph{bulk-boundary correspondence} (in a $K$-theoretic sense as pioneered by \cite{Kellendonk1,Kellendonk3}, see also \cite{Prodan,Elbau,Graf,Hatsugai}, and \cite{Bourne} for a $KK$-theory perspective) in two dimensions, even in the presence of disorder. A similar simplification was found in basic cases in higher dimensions, for both complex and real $K$-theory, in a follow-up paper \cite{MT2}.
The bulk-boundary correspondence of topological invariants is an important aspect of the analysis of the integer quantum Hall effect via noncommutative geometry \cite{Bellissard,Connes85,Connes94}, as well as its generalisations \cite{CHMM,Marcolli}. For a recent interesting alternate approach in the general field of topological phases of matter, see \cite{Witten}. 

The notion of a Brillouin torus of quasi-momenta familiar from Bloch theory is a central one in condensed matter physics. Besides being a primary source of topological invariants of physical interest, the torus structure permits one to perform T-duality transformations. This means that the topological invariants, especially the $K$-theoretic ones \cite{FM,T,Loring,Sheinbaum,Kellendonk5}, can be analysed on the T-dual side, or in real space, where they may be more easily understood. In fact, the concept of a Brillouin torus admits a vast generalisation using the language of noncommutative geometry. T-duality, as a geometric version of a generalized Fourier transform, continues to make sense in the more general noncommutative and even parametrised setting.

The first two authors introduced \emph{parametrised} noncommutative strict deformation quantization of $C^*$-algebras with a torus action, and related it to T-duality of principal torus bundles with H-flux \cite{HM09,HM10}. While these papers were initially motivated by dualities in string theory, they may also have applications in condensed matter physics. For instance, parametrised families of physical systems are interesting in that they can have observable geometric phases or holonomies associated to them. In principle, such phenomena can be combined in non-trivial ways with commonly used topological invariants such as those arising from Bloch band topology. 

In this paper, we prove that the bulk-boundary correspondence is again trivialised in the \emph{parametrised} context, for a special case with nontrivial H-flux. In this case, the T-dual is a parametrised deformation of a 3-torus, and may be identified with the integer Heisenberg group algebra. The latter has an interpretation as a deformed version of the Brillouin torus in 3D, and we apply this picture to a model of a topological insulator with a uniform distribution of lattice dislocations. In this condensed matter physics context, a second related trivialization result for the bulk-boundary correspondence is proven. Finally, we also state a general conjecture which subsumes our previous results and is of independent mathematical interest.

\tableofcontents

\section{T-duality and bulk-boundary correspondence}
We begin by briefly summarizing the results in \cite{HM09,HM10}, which will provide the technical background for the general conjecture \ref{generalconjecture}. This section can be skimmed over to get to the main results of this paper in Section \ref{section:Tdualitytrivial} more quickly. For an application to topological insulators with lattice dislocations, see Section \ref{section:screw}.

\subsection{Overview of T-duality and NCPT bundles}
Suppose that $A(X)$ is a $C^*$-bundle over a
locally compact space $X$ with a fibrewise action of a torus
$\TT^n$, and that $A(X) \rtimes \TT^n \cong \CT(X, H_3)$, where $\CT(X,
H_3)$ is a continuous trace algebra with spectrum $X$ and
Dixmier-Douady class $H_3 \in H^3(X; \ZZ)$. These
$C^*$-bundles are called $H_3$-twisted NCPT (noncommutative principal torus) bundles over $X$. Our first
main result there is that any $H_3$-twisted NCPT bundle $A(X)$ is
equivariantly Morita equivalent to the parametrised deformation
quantization of the continuous trace algebra $$\CT(Y_{H_2},
q^*(H_3))_{\sigma},$$ where $q: Y_{H_2} \to X$ is a principal torus
bundle with Chern class  $H_2 \in H^2(X;
H^{1}(\TT^n;\ZZ))$, and $\sigma\in C_b(X, Z^2(\ZZ^n,U(1)))$
a defining parametrised deformation such that $[\sigma] = H_1 \in H^1(X;
H^{2}(\TT^n;\ZZ))$. Here $\ZZ^n$ is the Pontryagin dual of $\TT^n$. This enables us to prove that the
continuous trace algebra $$\CT(X\times \TT^n, H_1 + H_2
+H_3),$$ with Dixmier-Douady class $H_1+H_2+H_3 \in
H^3(X \times \TT^n; \ZZ)$ where $H_j \in H^j(X; H^{3-j}(\TT^n;\ZZ))$,
has an action of the vector group $\RR^n$ that is the universal
cover of the torus $\TT^n$, and covering the $\RR^n$-action on $X \times \TT^n$.  Moreover the crossed product can be
identified up to $\TT^n$-equivariant Morita equivalence,
\beq\label{Morita}
\CT(X\times \TT^n, H_1 + H_2 +H_3) \rtimes \R^n \cong \CT(Y_{H_2}, q^*(H_3))_{\sigma}.
\eeq
That is, the T-dual of $(X\times \TT^n, H_1 + H_2 +H_3)$ is the parametrised strict deformation quantization
of $(Y_{H_2}, q^*(H_3))$ with deformation parameter $\sigma,\, [\sigma]=H_1$. From this we obtain the explicit
dependence of the $K$-theory of $\CT(Y_{H_2}, q^*(H_3))_{\sigma}$ in terms of the deformation parameter.

In particular, by equation \eqref{Morita} and the Connes--Thom isomorphism \cite{Connes81}, we deduce the isomorphism
\beq
K^j(X\times \TT^n, H_1 + H_2 +H_3) \cong K_{j+n}(\CT(Y_{H_2}, q^*(H_3))_{\sigma}).\label{Tdualityisomorphism}
\eeq
This is a more explicit version of a special case of noncommutative T-duality considered in \cite{MR05,MR06}.

\subsection{Boundary NCPT bundles and the bulk-boundary map}\label{section:bulkboundaryoverview}
In this paper, we will regard $\CT(Y_{H_2},q^*(H_3))_{\sigma}$ as the \emph{bulk NCPT bundle} over $X$. From the above discussion, we have $$\CT(Y_{H_2},q^*(H_3))_{\sigma}\rtimes \TT^n \sim \CT(X, H_3),$$ so Takai duality yields a Morita equivalence
\beq
\CT(Y_{H_2},q^*(H_3))_{\sigma}\sim \CT(X, H_3)\rtimes\ZZ^n.\label{continuoustraceidentity}
\eeq
We wish to ``peel off'' the action of a particular $\ZZ$ subgroup which we denote by $\bZ=\widehat{\bT}$, where $\bT\subset \TT^n= \TT^{n-1}\times\TT\equiv\aT^{n-1}\times\bT=\widehat{\aZ}^{n-1}\times\widehat{\bZ}$. That is, $$\CT(X, H_3)\rtimes\ZZ^n\cong (\CT(X, H_3)\rtimes\aZ^{n-1})\rtimes\bZ.$$

Under the inclusion $\iota:X\times \aT^{n-1}\rightarrow X\times\TT^n$, we have the restriction  
$$\iota^*(X\times \TT^n,\, H_1 + H_2 +H_3) = (X\times \aT^{n-1},\, \iota^*H_1 + \iota^*H_2 +H_3).$$
As in \eqref{Tdualityisomorphism}, the restricted T-dual transformation (as noncommutative principal $\TT^{n-1}$-bundles) is also an isomorphism in $K$-theory,
\beq
K^j(X\times \aT^{n-1},\, \iota^*H_1 + \iota^*H_2 +H_3) \overset{T_a}{\cong} K_{j+n-1}\left(\CT(Y_{\iota^*H_2}, q_a^*(H_3))_{\iota^*\sigma}\right).\label{restrictedTdualityisomorphism}
\eeq
On the right-hand-side of \eqref{restrictedTdualityisomorphism}, $Y_{\iota^*H_2}$ is a principal $\aT^{n-1}$-bundle over $X$ whose Chern class is equal to 
$\iota^*H_2 \in H^2(X;H^{1}(\aT^{n-1};\ZZ))$, its bundle projection is $q_a:Y_{\iota^*H_2}\rightarrow X$, and $[\iota^*\sigma] = \iota^*H_1 \in H^1(X;
H^{2}(\TT^{n-1};\ZZ))$ is the induced parametrised deformation. Note that the $\TT^n$-bundle $Y_{H_2}$ can also be regarded as a principal $\bT$-bundle over $Y_{\iota^*H_2}$, with bundle projection $q_b$ being the quotient under the action of the subgroup $\bT\subset\TT^n$; thus $q=q_a\circ q_b$. We will regard $\CT(Y_{\iota^*H_2},\, q_a^*(H_3))_{\iota^*\sigma}$ as the \emph{boundary NCPT bundle} over $X$.

As in \eqref{continuoustraceidentity}, we also have a Morita equivalence $\CT(Y_{\iota^*H_2},\, q_a^*(H_3))_{\iota^*\sigma}\sim \CT(X, H_3)\rtimes\aZ^{n-1}$. Taking crossed products with $\bZ$, we obtain
\beq
\CT(Y_{\iota^*H_2},\, q_a^*(H_3))_{\iota^*\sigma}\rtimes\bZ\sim\CT(X, H_3)\rtimes\ZZ^n\sim \CT(Y_{H_2},\, q^*(H_3))_{\sigma},\nonumber
\eeq
exhibiting the bulk NCPT bundle as a crossed product of the boundary NCPT bundle by $\bZ$. Associated to this $\bZ$ action is the Pimsner--Voiculescu \emph{bulk-boundary map} \cite{Pimsner} 
$$\partial:K_{j+n}(\CT(Y_{H_2},\, q^*(H_3))_{\sigma})\rightarrow K_{j+n-1}(\CT(Y_{\iota^*H_2},\, q_a^*(H_3))_{\iota^*\sigma}).$$

In general, we expect the following to hold.

\begin{conjecture}\label{generalconjecture}
The following diagram commutes,
\beq\label{BB1}
\xymatrix{
K^j(X\times \TT^n,\, H_1 + H_2 +H_3)  \ar[d]^{\iota^*} \ar[rr]^{\sim\;}_{T\;} && K_{j+n}\left(\CT(Y_{H_2},\, q^*(H_3))_{\sigma}\right) \ar[d]^\partial \\
K^j(X\times \TT^{n-1},\, \iota^*H_1+\iota^*H_2 +H_3) \ar[rr]^{\sim}_{T_a} && K_{j+n-1}\left(\CT(Y_{\iota^*H_2},\, q_a^*(H_3))_{\iota^*\sigma}\right)  }  
\eeq
which will show that the bulk-boundary correspondence is trivialised by T-duality in this parametrised context. Here, $T_a$ is noncommutative T-duality with respect to $\TT^{n-1}=\aT^{n-1}$, and $\iota^*$ is the induced restriction map in (twisted) $K$-theory under the inclusion $\iota:X\times \TT^{n-1}\rightarrow X\times\TT^n$.
\end{conjecture}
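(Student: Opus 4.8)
The plan is to realise both vertical maps in \eqref{BB1} as structure maps of long exact sequences, and to exhibit T-duality as an isomorphism of those sequences carrying the restriction map to the Pimsner--Voiculescu boundary map. First I would reduce to the case of a single circle. Writing $\TT^n=\aT^{n-1}\times\bT$ and correspondingly $\RR^n=\RR^{n-1}\times\RR$ with the first factor covering $\aT^{n-1}$, the Connes--Thom map $T$ of \eqref{Tdualityisomorphism} factors as $T=T_b\circ T_a$, where $T_a$ is the $\RR^{n-1}$-Connes--Thom isomorphism implementing the restricted T-duality of \eqref{restrictedTdualityisomorphism} carried out fibrewise over $\bT$, and $T_b$ is Connes--Thom for the remaining $\RR$ covering $\bT$. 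Since $\iota^*$ is restriction in the $\bT$-direction it commutes with $T_a$ by naturality of Connes--Thom, so applying $T_a$ to the left column reduces \eqref{BB1} to the statement: the $\bT$-direction restriction on the $K^j$-side corresponds under $T_b$ to the Pimsner--Voiculescu boundary map $\partial$ for the $\bZ=\widehat{\bT}$-action on the boundary algebra $B_\partial=\CT(Y_{\iota^*H_2},q_a^*(H_3))_{\iota^*\sigma}$.

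For this reduced statement the key structural inputs are: (i) T-dualising the $\aT^{n-1}$-directions of $X\times\TT^n$ produces the mapping torus $M_\beta(B_\partial)$ of the Takai-dual automorphism $\beta$ of $B_\partial$, where $\beta$ is implemented by tensoring with the line bundle on $Y_{\iota^*H_2}$ obtained from the $H^1(\bT;\ZZ)$-component of $H_2$ (for $n=1$ this recovers the circle-bundle T-duality, with $B_\partial=\CT(X,H_3)$ and $\beta=\otimes L$, $c_1(L)=H_2$); (ii) under this identification $\iota^*$ becomes the quotient map $M_\beta(B_\partial)\to B_\partial$, i.e. evaluation at the basepoint of $\bT$, which is the connecting datum of the suspension extension $0\to SB_\partial\to M_\beta(B_\partial)\to B_\partial\to 0$; (iii) $M_\beta(B_\partial)\rtimes\RR$ is $\TT^{n-1}$-equivariantly Morita equivalent to the bulk algebra $B_\partial\rtimes_\beta\bZ$ via precisely the map $T_b$, in accordance with \eqref{Morita} and the Morita equivalences \eqref{continuoustraceidentity}.

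The core of the argument is then the classical fact that underlies the very construction of the Pimsner--Voiculescu sequence: the Connes--Thom isomorphism for $M_\beta(B_\partial)\rtimes\RR\cong B_\partial\rtimes_\beta\bZ$ carries the six-term exact sequence of the extension in (ii) isomorphically onto the Pimsner--Voiculescu six-term sequence of $B_\partial\rtimes_\beta\bZ$, sending the evaluation map to $\partial$ and $1-\beta^*$ to $1-\beta_*$. This yields directly that $T_b$ intertwines $\iota^*$ with $\partial$, hence that \eqref{BB1} commutes; the consistency of the remaining squares (matching $\iota^*$ on $K^{j-1}$ with $1-\beta_*$, etc.) then holds automatically by the same naturality, providing a useful internal check.

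The step I expect to be the main obstacle is (i) together with (iii) in the presence of the parametrised deformation $\sigma$ and for $n>1$: one must verify that T-dualising the $\aT^{n-1}$-directions genuinely yields the mapping torus of the \emph{deformed} boundary algebra with $\beta$ the correct Takai-dual automorphism, i.e. that the deformation $\iota^*\sigma$ and the twist $q_a^*H_3$ assemble compatibly and that cross-terms between the $\aT^{n-1}$- and $\bT$-directions of $H_2$ do not obstruct the factorisation $q=q_a\circ q_b$ underlying the splitting $T=T_b\circ T_a$. This is exactly where the hypotheses of \cite{HM09,HM10} on $H_3$-twisted NCPT bundles and the precise form of \eqref{Morita} must be used, and it is the point at which the general conjecture may require extra assumptions; accordingly, in this paper we establish the conjecture in a special case in which these compatibilities can be verified explicitly.
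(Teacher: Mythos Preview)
Your proposal is a reasonable outline, but it is important to be clear about what the paper actually does with this statement: it is stated as a \emph{conjecture} and is not proved in general. What the paper proves is the special case $X=S^1$, $n=2$, $H_2=H_3=0$, $[\sigma]=H_1$ a generator (Theorem~\ref{thm:main}), together with a closely related nilmanifold version (Theorem~\ref{thm:second}).

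For that special case the paper's argument is quite different from yours. Rather than factorising $T=T_b\circ T_a$ along the two circle directions and invoking the mapping-torus description of the Pimsner--Voiculescu sequence, the paper factorises $T=T_2\circ T_1$ through the Heisenberg nilmanifold $\nil_k$: here $T_1$ is circle-bundle T-duality $K^0(\TT^3,kH_1)\to K^1(\nil_k)$ and $T_2$ is the Baum--Connes assembly map $K^1(\nil_k)\to K_0(C^*(\heisZ(k)))$. The diagram is then verified by \emph{explicit computation on generators}: the $K$-theory groups are written down concretely, $T_1$ is read off from the Bouwknegt--Evslin--Mathai formulas, $T_2$ is pinned down using cyclic $2$-cocycles $\tau_{ac},\tau_{bc}$ built from group cocycles on $\heisZ(k)$ together with the Connes--Moscovici higher index theorem, and $\partial$ is computed directly from the PV sequence. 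Commutativity is checked generator by generator.

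Your structural route via $M_\beta(B_\partial)$ and the identification of Connes--Thom for the mapping torus with the PV six-term sequence is a legitimate and more conceptual strategy; if the compatibilities in your steps (i) and (iii) can be established it would yield the general conjecture without generator-chasing. The paper's approach, by contrast, buys concreteness: it produces explicit formulas for $T$ on named $K$-theory classes (useful for the physics applications in Section~\ref{section:screw}) and, via the $T_2\circ T_1$ factorisation, simultaneously proves the second commutative diagram~\eqref{BB2} involving $\nil_k$, which your approach does not directly address. You correctly identify the obstruction to your method in full generality --- controlling the cross-terms of $H_2$ and the parametrised deformation under the splitting $T=T_b\circ T_a$ --- and this is indeed why the paper restricts to the special case rather than attempting a general proof.
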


For a general reference on $C^*$-algebras and $K$-theory, see \cite{Blackadar},  for a general reference on $C^*$-crossed product algebras see \cite{Williams}, for a general reference on continuous trace $C^*$-algebras, see \cite{Raeburn-Williams} and for their $K$-theory, see \cite{Rosenberg3}.  A general reference on Algebraic Topology is \cite{Hatcher}.

\subsection{Ordinary Fourier transform of restriction is integration}
T-duality can be thought of as a generalised Fourier transform which, instead of transforming ordinary functions, gives an isomorphism at the level of topological invariants. Consider the Fourier transform $\mathrm{FT}_{\TT^d}:f\mapsto\widehat{f}$ which takes $f:\ZZ^d\rightarrow\CC$ to $\widehat{f}:\widehat{\ZZ^d}=\TT^d\rightarrow\CC$. This is implemented by the kernel $P({\bf n},{\bf k})=e^{i {\bf n}\cdot {\bf k}},\, {\bf n}\in\ZZ^d, {\bf k}\in\TT^d$,
\beq
    \widehat{f}({\bf k})=\sum_{\bf n} P({\bf n},{\bf k})f({\bf n})=\sum_{\bf n} e^{i {\bf n}\cdot {\bf k}}f({\bf n})\nonumber,
\eeq
while $P({\bf n},{\bf k})^{-1}$ implements the inverse transform with a similar formula (the Chern character for the \emph{Poincar\'{e} line bundle $\mathcal{P}\rightarrow\TT^d\times\TT^d$} is the analogous object in the Fourier--Mukai transform in T-duality). 

Write $({\bf m},n_d)={\bf n}$ and let $\iota$ be the inclusion of $\ZZ^{d-1}\rightarrow\ZZ^d$ taking ${\bf m}\mapsto({\bf m},0)$. Let $\partial:\widehat{f}\mapsto \partial\widehat{f}$ be partial integration along the $d$-th circle in $\TT^d$. Since this picks out only the part of $\widehat{f}$ with Fourier coefficient $n_d=0$, it follows that there is a commutative diagram
\beq\label{FTdiagram}
\xymatrix{
f  \ar[d]^{\iota^*} \ar[rr]^{\sim\;}_{\mathrm{FT}_{\TT^d}} && \widehat{f} \ar[d]^\partial \\
\iota^*f \ar[rr]^{\sim}_{\mathrm{FT}_{\TT^{d-1}}} && \partial\widehat{f}}.
\eeq
If we view $C(\TT^d)$ as a crossed product $C(\TT^{d-1})\rtimes\dZ$ with trivial action of the $d$-th copy of $\ZZ$, and represent the torus $K$-theory classes by differential forms, the Pimsner--Voiculescu boundary map is implemented by integration (or push-forward) along $\dT=\widehat\dZ$. The general conjecture \ref{generalconjecture} is in this way suggested by the elementary diagram \eqref{FTdiagram}.

\section{T-duality trivialises bulk-boundary correspondence in parametrised context}

\label{section:Tdualitytrivial}
In this paper, we will restrict ourselves to the special case where $X=\TT=S^1$, $n=2$, $H_3=0=H_2$ and $[\sigma^1]=H_1\in H^1(S^1, H^2(\TT^2,\ZZ))\cong\ZZ$ the decomposable generator. When regarded as an element of $H^3(S^1\times\TT^2;\ZZ)$, $H_1$ is the volume form of $S^1\times\TT^2\cong \TT^3$. Its corresponding deformation parameter $\sigma^1\in C(S^1,Z^2(\ZZ^2,U(1)))\cong C(S^1, \TT)$ is the identity function; more explicitly, the $U(1)$-valued multiplier on $\ZZ^2\times\ZZ^2$ at the point $e^{2\pi i\theta}\in S^1$ is 
\begin{equation}
\sigma^1(e^{2\pi i\theta}):((p,q),(r,s))\mapsto e^{i\theta (ps-qr)},\qquad (p,q),(r,s)\in\ZZ^2.\label{basicdeformationparameter}
\end{equation}
For $kH_1$, the deformation parameter is $\sigma^k$, i.e.\ $kH_1=k[\sigma^1]=[\sigma^k]$. 

In these cases, the top row of Conjecture \ref{generalconjecture} is
\beq
K^j(S^1\times \TT^2, k H_1) \overset{T}\cong K_{j}(C(S^1 \times \TT^2)_{\sigma^k}).\label{specialcasetoprow}
\eeq
The general case is work in progress.

\subsection{Generalities on the integer Heisenberg group}\label{section:Heisenberggeneralities}
The right-hand-side of \eqref{specialcasetoprow} (the bulk NCPT bundle) is the $C^*$-algebra $C(S^1 \times \TT^2)_{\sigma^k}$, which is the parametrised deformation quantization of $C(S^1 \times \TT^2)$ by $\sigma^k$. By section 5 \cite{MR05} we see that for $k\ne 0$, 
\beq
C(S^1 \times \TT^2)_{\sigma^k} \cong C^*(\heisZ(k))\nonumber
\eeq
where the integer Heisenberg group $\heisZ(k)$ is defined as
\beq
\heisZ(k) = \left\{\left(\begin{array}{ccc}1 & a & \frac{c}{k} \\0 & 1 & b \\0 & 0 & 1\end{array}\right) \Big|\, a,b,c \in \ZZ\right\}.\nonumber
\eeq
We also write $\heisZ\coloneqq\heisZ(1)$.

As we will need to keep track of the various subgroups in $\heisZ$, we introduce the following notation. The normal subgroup of matrices with $a=0$ (resp.\ $b=0$) is denoted by $\bcZ$ (resp.\ $\acZ$). The subset with $c=0$ is denoted by $\abZ$. The central subgroup with $a=0=b$ is denoted by $\cZ$, while the subgroup with $b=0=c$ (resp.\ $a=0=c$) is denoted by $\aZ$ (resp.\ $\bZ$). Thus $\heisZ(k)$ is a non-split central extension of $\ZZ^2$ by $\ZZ$,
\begin{equation}
0\longrightarrow \cZ\longrightarrow \heisZ(k)\longrightarrow\abZ\longrightarrow 0,\label{heisenbergasextension}
\end{equation}
where we have reused the symbol $\abZ$ for the quotient group. We will peel off the action of $\bZ$ as described in Section \ref{section:bulkboundaryoverview}

If we label elements of $\heisZ(k)$ by the 3-tuple $(a,b,c)\in\ZZ^3$, the group multiplication is $(a_1,b_1,c_1)\cdot(a_2,b_2,c_2)=(a_1+a_2, b_1+b_2, c_1+c_2+ka_1b_2)$. Then the group 2-cocycle for the central extension \eqref{heisenbergasextension} is $\sigma_k^{\mathrm{group}}((a_1,b_1),(a_2,b_2))=ka_1b_2$.

We can also write $\heisZ(k)$ as a semi-direct product in two ways,
\beq\label{eqn:semiz}
\heisZ(k)\cong \bcZ\rtimes\aZ\cong \acZ\rtimes\bZ.
\eeq
For example, the action of $b\in\bZ$ on $\acZ$ by conjugation takes $(a,c)\mapsto(a,c-kba)$. 
 The action of $b\in\bZ$  on $\acZ$ can also be expressed via an ${\rm SL}(2, \ZZ)$ matrix as follows,
\beq\label{eqn:sl(2z)}
(a,c)\left(\begin{array}{cc}1 & -kb \\0 & 1\end{array}\right) = (a,c-kba).
\eeq
In particular, using \eqref{eqn:semiz}, we get the (split) short exact sequence of groups,
\beq\label{heiszextn2}
0\longrightarrow  \acZ \longrightarrow \heisZ(k) \longrightarrow \bZ \longrightarrow 0.
\eeq
The Pontryagin dual of $\acZ$ will be denoted by $\acT$, and similarly for the other subgroups. For instance, we have $C^*(\acZ)\cong C(\acT)$.

The group $C^*$-algebra $C^*(\heisZ(k))$ is generated by three unitaries $U,V,W$ subject to the relation $UV=W^kVU$ and $W$ being central. We can view $U,V,W$ as the respective images in $C^*(\heisZ(k))$ of the generators of $\aZ, \bZ, \cZ$. As $\heisZ(k)$ is a semi-direct product, we may write $C^*(\heisZ(k))$ as a crossed product 
\beq
C^*(\heisZ)\cong C^*(\acZ)\rtimes\bZ\cong C(\acT)\rtimes\bZ,
\eeq
with $C(\acT)$ the boundary NCPT bundle. It is convenient to think of $\aT,\bT,\cT$ as unit circles in the complex plane, whose points are respectively labelled by the complex numbers $u,v,w$, then $U,V,W$ are the identity functions on these unit circles. Then the generating automorphism $\alpha_k$ in the crossed product $C(\acT)\rtimes\bZ\equiv C(\acT)\rtimes_{\alpha_k} \bZ$ acts on $f\in C(\acT)$ by 
\begin{equation}
(\alpha_k\cdot f)(u,w)=f(w^{-k}u,w).\label{actiondefiningheisenberg}
\end{equation}
Alternatively, it acts on the unitaries $U^aW^c\in C(\acT)$ by 
\begin{equation}
U^aW^c\mapsto U^aW^{c-ka}.\label{actiondefiningheisenberg2}
\end{equation}

From the parametrised viewpoint, $C^*(\heisZ(k))$ is also a twisted crossed product $$C^*(\heisZ(k))\cong C(\cT)\rtimes_{\sigma^k}\abZ$$ with $U(C(\cT))$-valued cocycle $\sigma^k$ (c.f.\ Eq.\  \eqref{basicdeformationparameter}). Thus, we can regard $C^*(\heisZ(k))$ as a continuous field of noncommutative tori parametrised by $S^1=\cT$, with the rotation angle of the noncommutative torus over a point $w=e^{2\pi i \theta}\in S^1$ being $2\pi k\theta$. To emphasize this parametric point of view, we will often write $S^1$ in place of $\cT$.

\subsection{Statement of main results}

Associated to the $\bZ$ action on $C(\acT)$ is the Pimsner--Voiculescu boundary map $$\partial:K_0(C^*(\heisZ(k)))\rightarrow K_1(C(\acT))\cong K^1(S^1\times\aT).$$

The first main result in our paper is, 

\begin{theorem}\label{thm:main}
The following diagram commutes,
\beq\label{BB}
\xymatrix{
K^0(S^1\times \TT^2,\, k H_1)  \ar[d]^{\iota^*} \ar[r]^{\sim\;}_{T\;} & K_0(C^*(\heisZ(k))) \ar[d]^\partial \\
K^0(S^1 \times \TT) \ar[r]^{\sim}_{T_a} & K^1(S^1 \times \TT)  }  
\eeq
which shows that the bulk-boundary correspondence is trivialised by T-duality in this parametrised context. Here, $T_a$ is T-duality (Fourier--Mukai transform) with respect to $\TT=\aT$, and $\iota^*$ is the induced restriction map in (twisted) $K$-theory under the inclusion $\iota:S^1\times \TT\rightarrow S^1\times\TT^2$.
\end{theorem}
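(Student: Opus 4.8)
The plan is to prove commutativity of \eqref{BB} by realizing its left column as the surjection in a twisted $K$-theory Gysin sequence attached to the codimension-one inclusion $\iota$, realizing its right column as the surjection in the Pimsner--Voiculescu sequence \cite{Pimsner} attached to the peeled $\bZ=\widehat{\bT}$ action, and showing that $T$ identifies these two short exact sequences. This reduces the theorem to a single naturality statement plus a sign normalization, the former being the main obstacle.

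First I would compute the four groups and the two vertical maps. Writing $C^*(\heisZ(k))=C(\acT)\rtimes_{\alpha_k}\bZ$, equation \eqref{actiondefiningheisenberg2} shows that $\alpha_k$ acts trivially on $K_0(C(\acT))\cong\ZZ^2$ and by $[U]\mapsto[U]-k[W]$, $[W]\mapsto[W]$ on $K_1(C(\acT))\cong\ZZ\langle[U]\rangle\oplus\ZZ\langle[W]\rangle$, so the six-term Pimsner--Voiculescu sequence collapses to
\beq
0\longrightarrow K_0(C(\acT))\xrightarrow{\,j_*\,}K_0(C^*(\heisZ(k)))\xrightarrow{\,\partial\,}\ZZ\langle[W]\rangle\longrightarrow 0,
\eeq
with middle term $\cong\ZZ^3$; thus $\partial$ annihilates $\operatorname{im}(j_*)$ and sends a complementary $\ZZ$ isomorphically onto $\ZZ\langle[W]\rangle\subset K_1(C(\acT))\cong K^1(S^1\times\aT)$. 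On the topological side, deleting $S^1\times\aT\times\{\pt\}$ from $S^1\times\TT^2$ leaves an open set that retracts onto $S^1\times\aT$ and on which $kH_1$ is trivial (the restriction loses a cohomological dimension), so the excision sequence in twisted $K$-theory collapses to
\beq
0\longrightarrow K^1(S^1\times\aT)\xrightarrow{\,a\,}K^0(S^1\times\TT^2, kH_1)\xrightarrow{\,\iota^*\,}\ZZ\langle\beta\rangle\longrightarrow 0,
\eeq
where $a$ is extension-by-zero from the complement, $\beta$ is the Bott generator of $K^0(S^1\times\aT)$ (with $\operatorname{ch}\beta=[\theta_a\theta_c]$), and $\ZZ\langle\beta\rangle=\ker(\delta)$ for the connecting map $\delta=$ multiplication by $k$ times $\beta$; in particular $K^0(S^1\times\TT^2, kH_1)\cong\ZZ^3$ and $\iota^*$ is surjective onto $\ZZ\langle\beta\rangle$. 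Finally, $T_a$ is the Fourier--Mukai transform along $\aT$, so as in the elementary diagram \eqref{FTdiagram} it exchanges the $\aT$-leg with the dual leg, giving $T_a([1])=\pm[U]$ and $T_a(\beta)=\pm[W]$; thus $T_a$ restricts to an isomorphism $\ZZ\langle\beta\rangle\xrightarrow{\sim}\ZZ\langle[W]\rangle$.

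With these in place, commutativity of \eqref{BB} follows from two facts: (A) $T$ carries $\ker(\iota^*)=\operatorname{im}(a)$ onto $\ker(\partial)=\operatorname{im}(j_*)$, so that $T$ descends to an isomorphism of the two displayed sequences; and (B) the induced isomorphism on the $\ZZ$-quotients agrees with $T_a|_{\ZZ\langle\beta\rangle}$. Given (A), item (B) is a single check on one lift of $\beta$, performed by comparing the Poincaré-bundle kernels defining $T$ and $T_a$ and fixing orientations consistently; it amounts only to pinning a sign. To establish (A), I would use that the open--closed decomposition above is equivariant for the $\RR^2$-action covering the $\TT^2$-action, so that applying $-\rtimes\RR^2$ and the Morita equivalences of \cite{HM09,HM10,MR05} turns the excision sequence into the six-term sequence of an ideal--quotient extension of $C^*(\heisZ(k))$; the Connes--Thom isomorphism \cite{Connes81} is natural for that extension, and after separating off the $\bR$-factor dual to $\bZ$ one recognizes the resulting connecting map as the Pimsner--Voiculescu boundary $\partial$, which is precisely (A).

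The naturality step (A) is the main obstacle: it requires reconciling the Connes--Thom/Morita chain that \emph{defines} $T$ with the crossed-product (peeling) structure that defines $\partial$, i.e.\ showing that the square in \eqref{BB} genuinely factors through the elementary Fourier-transform diagram \eqref{FTdiagram}; once this is done the rest is bookkeeping. A more hands-on alternative, parallel to the computations in \cite{MT1,MT2}, is to choose generators of $K^0(S^1\times\TT^2, kH_1)$ adapted to the Gysin sequence --- two lying in $\operatorname{im}(a)$ and one lifting $\beta$ --- push them through the explicit isomorphism $C(S^1\times\TT^2)_{\sigma^k}\cong C^*(\heisZ(k))$ of \cite{MR05} and the Connes--Thom map, and evaluate $\partial\circ T$ on each directly.
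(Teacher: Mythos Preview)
Your proposal outlines a different route from the paper's. The paper does \emph{not} argue via naturality of short exact sequences; it proves commutativity of \eqref{BB} by a fully explicit computation on generators. Concretely, it factorizes $T=T_2\circ T_1$ (Proposition~\ref{prop:factorization}), where $T_1$ is circle-bundle T-duality $K^0(S^1\times\TT^2,kH_1)\to K^1(\nil_k)$ and $T_2$ is the Baum--Connes assembly map $K^1(\nil_k)\to K_0(C^*(\heisZ(k)))$; it then writes down explicit generators for all four groups and evaluates each of $T_1$, $T_2$, $\partial$, $\iota^*$, $T_a$ on them. The serious work is in identifying $T_2$ on generators: this is done via the Connes--Moscovici higher index theorem together with cyclic $2$-cocycles $\tau_{ac},\tau_{bc}$ on $C^*(\heisZ(k))$ built from group $2$-cocycles (Appendix~\ref{section:cycliccocycles}, Proposition~\ref{GCT}), which pin down $T_2([U_a])=[\wt P_{bc}]$, $T_2([U_b])=-[\wt P_{ac}]$, $T_2([Y])=[{\bf 1}]$. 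After that, $T_a\circ\iota^*=\partial\circ T$ is a three-line check.

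Your structural approach is appealing, and your PV and excision computations are consistent with the paper's, but the step you flag as the main obstacle---your (A)---is genuinely where the content lies, and the sketch you give does not close it. Saying that ``after separating off the $\bR$-factor dual to $\bZ$ one recognizes the resulting connecting map as the Pimsner--Voiculescu boundary'' hides exactly the compatibility that needs proof: one must check that the full chain of Morita equivalences and Connes--Thom isomorphisms from \cite{HM09,HM10,MR05} defining $T$ intertwines the twisted-$K$-theory excision connecting map with the PV boundary for the $\bZ$-action, including signs. That is not automatic, and the paper sidesteps it entirely by the cyclic-cocycle computation. Your ``hands-on alternative'' in the last paragraph is much closer to what the paper actually does, but note that pushing generators through the explicit isomorphism of \cite{MR05} still requires a tool to recognize their images in $K_0(C^*(\heisZ(k)))$; the paper's answer to that is precisely the pairing with $\tau_{ac},\tau_{bc},\tau$, which you would also need.
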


Theorem \ref{thm:main} is a special case of Conjecture \ref{generalconjecture}. We will prove the commutation of the diagram \eqref{BB} by first writing $T$ as a composition of a $T$-duality isomorphism $T_1$ for circle bundles, followed by a Baum--Connes isomorphism \cite{BCH} $T_2$, which in this case we show is a consequence of an imprimitivity theorem \cite{Rieffelsymmetric} and the Connes--Thom
isomorphism theorem \cite{Connes81} (Proposition \ref{factorization}). Then we compute the effect of the various maps on explicit $K$-theory generators. The proof of Theorem \ref{thm:main} is assembled in Section \ref{section:mainproof}. Along the way, we also prove a second related result:
\begin{theorem}\label{thm:second}
The following diagram commutes,
\beq\label{BB2}
\xymatrix{
K^1(\nil_k)  \ar[d]^{\iota^*} \ar[r]^{\sim\qquad}_{T_2\qquad} & K_0(C^*(\heisZ(k))) \ar[d]^\partial \\
K^1(S^1 \times \TT) \ar[r]^{\sim}_{T_{ac}} & K^1(S^1 \times \TT)  }
\eeq
where $\nil_k$ is the Heisenberg nilmanifold, $T_2$ is the Baum--Connes isomorphism described in the previous paragraph, $T_{ac}$ is the full T-duality isomorphism (Fourier--Mukai transform) with respect to $S^1\times\TT=\TT^2=\acT$, and $\iota$ is a fibre inclusion. Here $\nil_k$ is regarded as a fibre bundle over the circle $\bT$ with typical fibre $\acT$.
\end{theorem}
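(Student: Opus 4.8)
\textbf{Proof proposal for Theorem \ref{thm:second}.}

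The plan is to exploit the parallel structure between Theorems \ref{thm:main} and \ref{thm:second}: both have the same right-hand vertical map $\partial$ and the same bottom-right corner, so once we understand how the top map $T_2$ relates to the map $T$ of Theorem \ref{thm:main}, the bulk of the work will already be done. Recall that in the proof of Theorem \ref{thm:main} we factor $T = T_2 \circ T_1$, where $T_1$ is the T-duality isomorphism for circle bundles (Fourier--Mukai along the circle carrying $H_1$) and $T_2$ is the Baum--Connes isomorphism $K^1(\nil_k) \cong K_0(C^*(\heisZ(k)))$ realized via the imprimitivity theorem and the Connes--Thom isomorphism (Proposition \ref{factorization}). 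The first thing I would do is identify $K^1(\nil_k)$ with $K^0(S^1 \times \TT^2, kH_1)$ under $T_1$: since $\nil_k$ is precisely the total space of the principal $\bT$-bundle over $\acT = \TT^2$ (equivalently over $\aT^{n-1} \times \cT$) classified by $kH_1 \in H^2$, the Gysin/T-duality isomorphism $T_1$ gives $K^0(S^1\times\TT^2, kH_1) \xrightarrow{\sim} K^1(\nil_k)$, and $T = T_2 \circ T_1$. Hence the top square of \eqref{BB2} is obtained from the top square of \eqref{BB} by precomposing the left column with $T_1^{-1}$ and keeping the right column and top map ($T_2$ in place of $T = T_2\circ T_1$) compatible.

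Next I would analyze the two left-hand vertical maps. In \eqref{BB} the left map is $\iota^*: K^0(S^1\times\TT^2, kH_1) \to K^0(S^1\times\TT)$ induced by the inclusion $S^1\times\TT \hookrightarrow S^1\times\TT^2$ along which $kH_1$ pulls back to zero. In \eqref{BB2} the left map is $\iota^*: K^1(\nil_k) \to K^1(S^1\times\TT)$ induced by a fibre inclusion $\acT \hookrightarrow \nil_k$; note $\acT = S^1\times\TT$ here (the $\aT\times\cT$ factor). The key compatibility to check is that the square
\beq
\xymatrix{
K^0(S^1\times\TT^2, kH_1) \ar[r]^{\;\sim\;}_{T_1} \ar[d]_{\iota^*} & K^1(\nil_k) \ar[d]^{\iota^*} \\
K^0(S^1\times\TT) \ar[r]^{\;\sim\;}_{T_1'} & K^1(S^1\times\TT)
}\nonumber
\eeq
commutes, where the bottom $T_1'$ is the corresponding (untwisted) T-duality for the trivial circle bundle over $S^1\times\TT$ --- which is just a suspension/Bott-type map. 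This is essentially naturality of T-duality (Fourier--Mukai) with respect to the base-change along $\iota$: restricting the Poincaré-type bundle on $\nil_k \times_{\acT} \nil_k^\vee$ to the fibre reproduces the kernel for the trivial bundle. Granting this and the fact that the bottom row of \eqref{BB}, namely $T_a$, factors as $T_1'$ followed by whatever residual transform matches $T_{ac}$, the outer rectangle of \eqref{BB} decomposes as a stack of two squares whose top square is \eqref{BB2} (with $T_2$, $T_{ac}$) and whose bottom square is the naturality square just discussed. Since the total rectangle commutes by Theorem \ref{thm:main} and the bottom square commutes by naturality, the top square --- which is \eqref{BB2} --- commutes. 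One must also verify $T_{ac} = $ (full Fourier--Mukai on $\acT$) is indeed the composite appearing on the bottom; this is a matter of unwinding that T-duality along $\acT = \aT\times\cT$ factors as T-duality along $\aT$ (giving $T_a$, the bottom of \eqref{BB}) composed with T-duality along $\cT$ (absorbing the $T_1'$ discrepancy).

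Alternatively --- and this may be cleaner --- one can prove \eqref{BB2} directly by the same generator-chasing method used for \eqref{thm:main}: pick explicit generators of $K^1(\nil_k)$ (the nilmanifold $\nil_k$ fibres over $\bT$ with fibre $\TT^2$, and its K-theory is computable via the Gysin sequence for this fibration, with generators coming from the base circle class and the fibre torus classes twisted by the monodromy \eqref{eqn:sl(2z)}), push them through the Baum--Connes/Connes--Thom map $T_2$ of Proposition \ref{factorization} into $K_0(C^*(\heisZ(k)))$, apply the Pimsner--Voiculescu map $\partial$ explicitly using the crossed-product description $C^*(\heisZ(k)) \cong C(\acT)\rtimes_{\alpha_k}\bZ$ from \eqref{actiondefiningheisenberg}, and separately compute $\iota^*$ followed by $T_{ac}$ on the same generators, checking the two answers agree. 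The main obstacle, in either route, is the bookkeeping of the monodromy twist: the fibre inclusion $\iota: \acT \hookrightarrow \nil_k$ and the $\SL(2,\ZZ)$-action \eqref{eqn:sl(2z)} mean that the generator of $K^1$ coming from the $\bZ$-direction does not simply restrict, and one has to track how the Connes--Thom isomorphism for the $\bR$-action interacts with the flat but non-trivial structure of the nilmanifold. Concretely the hard computation is showing $\partial \circ T_2$ kills exactly the "fibre-transverse" generator and acts as the identity (up to the expected Poincaré-duality/Fourier--Mukai relabelling) on the fibre $K^1(\acT)$, matching $T_{ac}\circ\iota^*$; everything else is the same linear algebra over $\ZZ$ that appears in the proof of Theorem \ref{thm:main}.
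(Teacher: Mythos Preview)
Your proposal is correct and matches the paper's approach closely: the paper gives precisely your two routes, in reverse order---first a direct generator-chase on $[Y],[U_a],[U_b]\in K^1(\nil_k)$ verifying $T_{ac}\circ\iota^*=\partial\circ T_2$, and then the alternative deduction from Theorem~\ref{thm:main} via the naturality square for $T_1$/$T_c$ and the factorisations $T_2=T\circ T_1^{-1}$, $T_{ac}=T_a\circ T_c^{-1}$ (your $T_1'$ is the paper's $T_c$). One small labelling slip: for the $T_1$ step, $\nil_k$ is the principal $\cT$-bundle over $\abT$ (not a $\bT$-bundle over $\acT$), while the fibre inclusion $\iota$ in \eqref{BB2} refers to the other description $\acT\hookrightarrow\nil_k\to\bT$; the paper's combined diagram \eqref{BBfull} reconciles these two viewpoints exactly as you outline.
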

As we shall see, Theorems \ref{thm:main} and \ref{thm:second} can actually be summarised by a single diagram \eqref{BBfull}.

Let us elaborate on the description of $\nil_k$ as a bundle over $\bT$ in Theorem \ref{thm:second}. The real Heisenberg group $\heisR$ is defined as
\beq
\heisR = \left\{\left(\begin{array}{ccc}1 & a & c\\0 & 1 & b \\0 & 0 & 1\end{array}\right) \Big|\, a,b,c \in \RR\right\}\nonumber
\eeq
and $\heisZ(k)$ sits inside $\heisR$ as a discrete subgroup. The Heisenberg nilmanifold is $$\nil_k=\heisR/\heisZ(k).$$ We can write $\heisR$ as a semi-direct product in two ways,
\beq\label{eqn:semir}
\heisR\cong \bcR\rtimes\aR\cong \acR\rtimes\bR.
\eeq
For the second semi-direct product, the action of $b\in\bR$ on $\acR$ by conjugation takes $(a,c)\mapsto(a,c-ba)$. 
As in \eqref{eqn:sl(2z)real}, the action of $b\in\bR$  on $\acR$ can be expressed via an ${\rm SL}(2, \RR)$ matrix as follows,
\beq\label{eqn:sl(2z)real}
(a,c)\left(\begin{array}{cc}1 & -b \\0 & 1\end{array}\right) = (a,c-ba).
\eeq
Note that the restriction of \eqref{eqn:sl(2z)real} to an action of $b\in\bZ$ on the discrete subgroup $\acZ$ is $(a,\frac{c}{k})\mapsto(a,\frac{c-kba}{k})$. Equivalently, this action takes $(a,c)\mapsto(a,c-kba)$ after relabelling the elements of $\acZ$ and $\heisZ(k)$ with integers $a,b,c$ according to our convention in Section \ref{section:Heisenberggeneralities}. Thus \eqref{eqn:sl(2z)real} subsumes \eqref{eqn:sl(2z)}.

From \eqref{eqn:semir}, we get the (split) short exact sequence of groups,
\beq\label{heisrextn2}
0\longrightarrow  \acR \longrightarrow \heisR \longrightarrow \bR \longrightarrow 0.
\eeq
Using \eqref{heisrextn2} and \eqref{heiszextn2} and taking the quotient, we arrive at the description of $\nil_k$ regarded as a (non-principal) torus fibre bundle over the circle $\bT=\bR/\bZ$, or mapping torus, with fibre $F\sim\acT=\acR/\acZ$,
\beq
\acT \longrightarrow \nil_k \longrightarrow \bT.\label{eqn:nilastorusbundle}
\eeq
More explicitly, the ${\rm SL}(2,\ZZ)$ transformation $\begin{pmatrix}1 & -k \\ 0 & 1\end{pmatrix}$ gives the diffeomorphism of the fibre specifying the bundle, and the map $\iota$ is taken to be a fibre inclusion.

\subsection{Factorization of T-duality}
\label{factorization}
The Heisenberg nilmanifold $\nil_k=\heisR/\heisZ(k)$ has a left action of the real Heisenberg group $\heisR$. Since $\heisR$ is a central extension,
\beq
0\to \RR \to \heisR \to \RR^2\to 0,\nonumber
\eeq
we can write it as a twisted Cartesian product $\heisR = \RR \times_\omega \RR^2$, where $\omega$ is the standard symplectic form on the vector space $\RR^2$
and twists the product on $\RR\times \RR^2$ in the usual way. 

We have, on the one hand, 
\begin{align}
\label{eqn:me}
C(\nil_k) \rtimes \heisR = (C(\nil_k) \rtimes \RR) \rtimes_\omega \RR^2 & \sim \CT(\TT^3, kH_1) \rtimes_\omega \RR^2\nonumber \\
\CT(\TT^3, kH_1) \rtimes_\omega \RR^2\otimes \cK & \cong \CT(\TT^3, kH_1) \rtimes\RR^2
\end{align}
where $\sim$ refers to Morita equivalence and the isomorphism in the second line is a consequence of the Packer-Raeburn trick,  \cite{PR}. The 3-torus $\TT^3=S^1\times \abT$ can be regarded as a circle bundle over $\abT$ with $H$-flux $kH_1$, and we observe that T-duality for this circle bundle \cite{BEM} is
\beq\label{eqn:T_1}
 K_\bullet(C(\nil_k) \rtimes \RR) \stackrel{\text{Connes--Thom}}{\cong} K^{\bullet+1} (\nil_k)  \overset{T_1^{-1}}{\cong}  K_\bullet(\CT(\TT^3, kH_1)).
\eeq
Note that $\nil_k$ is regarded here as a principal circle bundle over $\abT$ (se Section \ref{nil}). Also by noncommutative T-duality for $\TT^3$ as a torus bundle over $S^1$, Section 5 \cite{MR05},
\beq\label{eqn:T}
 K_\bullet(\CT(\TT^3, kH_1) \rtimes \RR^2) \stackrel{\text{Connes--Thom}}{\cong} K^{\bullet} (\TT^3, kH_1)  \stackrel{T}{\cong}  K_\bullet(C^*(\heisZ(k))).
\eeq

On the other hand, by the symmetric imprimitivity Theorem \cite{Rieffelsymmetric}, the 
crossed product $C^*$-algebra $C(\nil_k) \rtimes \heisR$ is strongly Morita equivalent to $C(\heisR\backslash \heisR)\rtimes \heisZ(k) = C^*(\heisZ(k))$, and in particular, 
\beq
K_\bullet(C(\nil_k) \rtimes \heisR) \cong K_\bullet(C^*(\heisZ(k))).\label{GreenHeisenberg}
\eeq
By the Connes--Thom isomorphism Theorem (c.f.\ Corollary 7 of \cite{Connes81}, Corollary 2 of \cite{Fack81}), there is a natural isomorphism
\beq
K_\bullet(C(\nil_k) \rtimes \heisR) \cong K_{\bullet+1}(C(\nil_k)).\label{ConnesThomHeisenberg}
\eeq
From \eqref{GreenHeisenberg} and \eqref{ConnesThomHeisenberg}, we conclude that 
\beq\label{eqn:T_2}
K^{\bullet+1}(\nil_k) =K_{\bullet+1}(C(\nil_k)) \stackrel{T_2}{\cong} K_\bullet(C^*(\heisZ(k))),
\eeq
which can also be understood as the the assembly map \cite{BCH} composed with Poincar\'e duality in this context.
Equations \eqref{eqn:me}, \eqref{eqn:T_1}, \eqref{eqn:T}, \eqref{eqn:T_2} imply the claimed factorization of T-duality, 

\begin{proposition}[Factorization of T-duality]\label{prop:factorization}
In the notation above, the following diagram commutes,

\beq\label{eqn:factorization}
\xymatrix{K^{\bullet} (\TT^3, kH_1) \ar[ddr]_{T_1}^{\sim} \ar[rr]_{T}^{\sim}&&
K_\bullet(C^*(\heisZ(k)))\\
&&\\
&K^{\bullet+1} (\nil_k) \ar[uur]_{T_2}^{\sim} &}
\eeq
\end{proposition}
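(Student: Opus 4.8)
The plan is to treat the crossed product $C(\nil_k) \rtimes \heisR$ as a common pivot whose $K$-theory is computed in two ways, and to check that the two computations identify the noncommutative $T$-duality $T$ with the composite $T_2 \circ T_1$. Each of $T$, $T_1$, $T_2$ is by construction a Morita equivalence followed by a Connes--Thom isomorphism, so the proof amounts to assembling \eqref{eqn:me}, \eqref{eqn:T_1}, \eqref{eqn:T} and \eqref{eqn:T_2} into a single commuting diagram, once the constituent maps are seen to be mutually compatible.

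First I would record the route along $T_1$ then $T_2$. The map $T_1$ is circle-bundle $T$-duality \eqref{eqn:T_1}: the Morita equivalence $C(\nil_k) \rtimes \RR \sim \CT(\TT^3, kH_1)$ of \eqref{eqn:me}, followed by the Connes--Thom isomorphism for the central $\RR$, giving $K^{\bullet}(\TT^3, kH_1) \cong K^{\bullet+1}(\nil_k) = K_{\bullet+1}(C(\nil_k))$. The map $T_2$ is then the Connes--Thom isomorphism $K_{\bullet+1}(C(\nil_k)) \cong K_\bullet(C(\nil_k) \rtimes \heisR)$ of \eqref{ConnesThomHeisenberg}, followed by the symmetric imprimitivity equivalence $C(\nil_k) \rtimes \heisR \sim C^*(\heisZ(k))$ of \eqref{GreenHeisenberg}. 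On the other hand, $T$ is \eqref{eqn:T}: starting from $\CT(\TT^3, kH_1)$ one applies a double Connes--Thom map to reach $K_\bullet(\CT(\TT^3, kH_1) \rtimes \RR^2)$ and then the Morita equivalence $\CT(\TT^3, kH_1) \rtimes \RR^2 \sim C^*(\heisZ(k))$. Since $\heisR = \RR \times_\omega \RR^2$, the algebra $C(\nil_k) \rtimes \heisR = (C(\nil_k)\rtimes\RR)\rtimes_\omega\RR^2$ sits between these two routes, being Morita equivalent to $\CT(\TT^3, kH_1) \rtimes \RR^2$ via the Packer--Raeburn stabilization trick of \eqref{eqn:me}.

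The technical heart is a compatibility statement about Connes--Thom maps under the iterated structure: the Connes--Thom isomorphism for the $\heisR$-action on $C(\nil_k)$ should factor, through $(C(\nil_k)\rtimes\RR)\rtimes_\omega\RR^2$, as the Connes--Thom map for the central $\RR$ followed by the Connes--Thom map for the twisted $\RR^2$; and, after the Packer--Raeburn identification $\CT(\TT^3,kH_1)\rtimes_\omega\RR^2 \otimes \cK \cong \CT(\TT^3,kH_1)\rtimes\RR^2$, the latter should agree with the double Connes--Thom map used in \eqref{eqn:T}. This follows from the naturality, multiplicativity and stability of the Connes--Thom construction \cite{Connes81, Fack81}, together with the naturality of the Packer--Raeburn isomorphism \cite{PR}. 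One must also check that the two resulting Morita equivalences $\CT(\TT^3, kH_1)\rtimes\RR^2 \sim C^*(\heisZ(k))$ --- the one of \eqref{eqn:T} and the one obtained by composing the stabilization of \eqref{eqn:me} with the symmetric imprimitivity equivalence \cite{Rieffelsymmetric} --- induce the same isomorphism on $K$-theory, both being instances of Green/Rieffel imprimitivity for the same pair of commuting actions ($\heisZ(k)$ and $\heisR$ on $\heisR$). Alternatively, $T_2$ can be recognised as the Baum--Connes assembly isomorphism for the amenable group $\heisZ(k)$ composed with Poincar\'e duality on the parallelizable nilmanifold $\nil_k$ \cite{BCH}, and the compatibility phrased in that language. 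Granting all of this, the triangle \eqref{eqn:factorization} commutes by a direct diagram chase.

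The main obstacle I anticipate is precisely this bookkeeping across the symplectic twist $\omega$: one must be sure the twisted crossed product by $\RR^2$ and its Packer--Raeburn desingularization carry the Connes--Thom isomorphism to the untwisted double Connes--Thom map appearing in $T$, with no spurious sign, and that the dimension count $\dim \heisR = 1 + 2 = 3$ governing the Connes--Thom degree shift along $\heisR$ is consistent with the single shift in $T_1$ and the (mod $2$) trivial shift in $T$. Once the naturality of each constituent isomorphism is pinned down, the rest of the argument is formal.
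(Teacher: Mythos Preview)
Your proposal is correct and follows essentially the same strategy as the paper: use $C(\nil_k)\rtimes\heisR$ as a common pivot, compute its $K$-theory in two ways via equations \eqref{eqn:me}, \eqref{eqn:T_1}, \eqref{eqn:T}, \eqref{eqn:T_2}, and deduce that $T=T_2\circ T_1$. The paper's own argument is in fact terser than yours --- it simply asserts that these four equations ``imply the claimed factorization'' --- whereas you have helpfully isolated the compatibility issues (iterated Connes--Thom across the symplectic twist, Packer--Raeburn naturality, agreement of the two imprimitivity equivalences) that make the diagram chase go through.
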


For the 2-torus $S^1\times\TT=\cT\times\aT$, T-duality $T_{ac}$ is just the ordinary Fourier--Mukai transform. It factorises as (the inverse\footnote{The Fourier transform and its inverse are the same up to a sign choice in the exponent of the kernel $P({\bf n},{\bf k})=e^{\pm i{\bf n}\cdot{\bf k}}$. Similarly the Fourier--Mukai transform and the inverse are the same up to a sign choice for first Chern class of the Poincar\'{e} line bundle.} of) T-duality $T_c$ with respect to one circle factor $S^1=\cT$, followed by T-duality $T_a$ with respect to the other circle factor $\aT$,
\beq\label{eqn:factorization2}
\xymatrix{&K^{\bullet} (S^1\times\TT)  \ar[ddr]_{T_{ac}}^{\sim} &\\
&&\\
K^{\bullet+1} (S^1\times\TT) \ar[uur]_{T_c}^
{\sim} \ar[rr]_{T_{a}}^{\sim}&&
K^\bullet(S^1\times\TT)
}
\eeq

We use the factorisations in \eqref{eqn:factorization} and \eqref{eqn:factorization2} to rewrite and combine the commutative diagrams \eqref{BB} and \eqref{BB2} as 
\beq\label{BBfull}
\xymatrix{
K^0(S^1\times \TT^2,\, k H_1)  \ar[dr]_{T_1}^{\sim} \ar[dddd]^{\iota^*} \ar[rr]^{\sim\;}_{T\;} & &  K_0(C^*(\heisZ(k))) \ar[dddd]^\partial \\
&K^{1} (\nil_k) \ar[ur]_{T_2}^{\sim} \ar[dd]^{\iota^*} &\\
&&\\
&K^1(S^1\times\TT)\ar[dr]_{T_{ac}}^{\sim} &\\
K^0(S^1 \times \TT) \ar[rr]^{\sim}_{T_a} \ar[ur]_{T_c}^{\sim} &&  K^1(S^1 \times \TT)  }  
\eeq

\section{Geometry of the Heisenberg Nilmanifold, $K$-theory of integer Heisenberg group algebra and PV sequence}

\subsection{$K$-theory generators}

\subsubsection{Generators for $K^0(S^1\times \TT^2, k H_1)$}
Let us coordinatise $S^1\eqqcolon\cT$ by $x_c$ and $\TT^2\equiv\abT=\aT\times\bT$ by $(x_a, x_b)$. By equation (4.15) \cite{BEM,BEM2} (see also section 4.1.4 \cite{BS}), we see that there is a natural isomorphism when $k\ne 0$,
\begin{align}
K^0(S^1\times \TT^2, k H_1) & \cong \widetilde{K}^0(S^1\times \TT^2) \cong  \ZZ[\cL_{ca}]  \oplus \ZZ[\cL_{cb}] \oplus \ZZ[\cL_{ab}] \nonumber \\
& \cong H^2(S^1\times \TT^2; \ZZ) \cong \ZZ[dx_c\wedge dx_a]  \oplus \ZZ[dx_c\wedge dx_b] \oplus \ZZ[dx_a\wedge dx_b] \nonumber
\end{align}
where $\widetilde{K}^0(S^1\times \TT^2)$ denotes the reduced $K$-theory, while for $i\neq j\in\{a,b,c\}$, $\cL_{ij}$ denotes the line bundle supported on 
the subtorus $\underset{i}{\TT}\times\underset{j}{\TT}$ of $S^1\times \TT^2$ with first Chern class equal to $[dx_i\wedge dx_j]$. 

Note that unlike in the untwisted ($k=0$) case, there is no generator for the ``rank'' in twisted $K$-theory. Strictly speaking, $\cL_{ij}$ should refer to the rank-0 virtual bundle $\wt{\cL}_{ij}\coloneqq \cL_{ij}-{\bf 1}$. We will write the latter when this clarification is required.

\subsubsection{Generators for $K^\bullet(S^1\times\TT^2)$}
The $K^0$ group of the 2-torus $S^1\times\TT=\cT\times \aT$ is simply
\begin{equation}
K_0(C(S^1\times\TT))=\ZZ[{\bf 1}]\oplus \ZZ[P_\mathrm{Bott}-{\bf 1}]=\ZZ[{\bf 1}]\oplus\ZZ[dx_c\wedge dx_a],\nonumber
\end{equation}
where we have used the Chern character isomorphism in the second equality to represent the second generator in terms of a differential form. Similarly, the $K^1$ group is
\begin{equation}
K_1(C(S^1\times\TT))\cong H^1(S^1\times\TT)=\ZZ[U_c]\oplus\ZZ[U_a]=\ZZ[dx_c]\oplus \ZZ[dx_a],\nonumber
\end{equation}
where $U_c, U_a : S^1\times \TT \to U(1)$ are 
continuous maps that are also generators of $H^1(S^1\times \TT;\ZZ)$ with odd degree 1 Chern class $[dx_c]$ and $[dx_a]$ respectively.

\subsubsection{Generators for $K_0(C^*(\heisZ(k)))$}\label{section:Heisenbergalgebraktheory}
The Bott projections for the commutative subalgebras $C^*(U,W)\cong C(\acT)$ and $C^*(V,W)=C^*(\bcT)$ are denoted by $[P_{ac}]$ and $[P_{bc}]$ respectively (see e.g.\  \cite{AP} for explicit formulae).

By results of \cite{AP, Ko} (see also the derivation of \eqref{heisenbergPVsequence} in Section \ref{subsection:PVboundary}),  $K_0(C^*(\heisZ(k)))$ is generated by the free module of rank one, $[{\bf 1}]$,
and also by the projections $[P_{ac}]$ and $[P_{bc}]$; that is,
\beq
K_0(C^*(\heisZ(k))) \cong \ZZ[{\bf 1}] \oplus \ZZ[P_{ac}] \oplus \ZZ[P_{bc}].\nonumber
\eeq 
Note that there is no ``Bott projection'' $P_{ab}$ in $C^*(\heisZ(k))$ since $C^*(U,V)$ is not a subalgebra of $C^*(\heisZ(k))$.

\subsubsection{$K$-theory of Heisenberg nilmanifold}\label{nil}
The Heisenberg nilmanifold 
$$\nil_k=\heisR/\heisZ(k)$$ is a principal circle bundle over $\abT$ with Chern class equal to $k[dx_a\wedge dx_b]$, and is also a classifying space for $\heisZ(k)$.

The generators for $K^1(\nil_k)$ are described as follows:
\begin{align}
K^1(\nil_k)& \cong   \ZZ[Y]  \oplus \ZZ[U_a] \oplus \ZZ[U_b] \nonumber\\
 \cong H^{odd}(\nil_k) & \cong \ZZ[dx_a\wedge dx_b\wedge \widehat A]  \oplus \ZZ[dx_a] \oplus \ZZ[dx_b] \nonumber
\end{align}
where $\widehat A$ is a connection on the principal circle bundle $\nil_k$ over $\abT$ with curvature $d\widehat A = k dx_a\wedge dx_b$. Here $Y: \nil_k \to SU(2)$ is a degree 1 continuous map
with odd degree-3 Chern class $[dx_a\wedge dx_b\wedge \widehat A]$ which is the generator of $H^3(\nil_k;\ZZ)$, 
while $U_a, U_b : \nil_k \to U(1)$ are 
continuous maps that are also generators of $H^1(\nil_k;\ZZ)$ with odd degree-1 Chern classes $[dx_a]$ and $[dx_b]$ respectively.

\section{Proof of main results}

\subsection{Explicit maps on generators}

\subsubsection{Bulk-boundary map $\partial$}\label{subsection:PVboundary}
Recall from \eqref{actiondefiningheisenberg} that $C^*(\heisZ(k))\cong C(S^1\times\aT)\rtimes_{\alpha_k}\bZ$. Its $K$-theory groups can be calculated along the lines of \cite{AP} (which handled the $k=1$ case). The PV sequence (\cite{Pimsner}) associated to the action $\alpha_k$ of $\bZ$ on $C(S^1\times\aT)$ is
\begin{equation}
0\rightarrow {\rm coker}_{1-\alpha_{k*}}(K_\bullet(C(S^1\times\aT))\rightarrow K_\bullet(C^*(\heisZ(k)))\xrightarrow{\partial} {\rm ker}_{1-\alpha_{k*}}(K_{\bullet-1}(C(S^1\times\aT))\rightarrow 0\nonumber
\end{equation}

Using the explicit expression \eqref{actiondefiningheisenberg}, we can compute the induced map $\alpha_{k*}$ on $K$-theory as follows. It can be shown that $\alpha_{k*}$ acts trivially on $K_0(C(S^1\times\aT))=\ZZ[{\bf 1}]\oplus\ZZ[P_{ac}]$. Thus
\begin{align}
{\rm ker}_{1-\alpha_{k*}}(K_0(C(S^1\times\aT)) & = \ZZ[{\bf 1}]\oplus \ZZ[P_{ac}] ,\nonumber\\
{\rm Im}_{1-\alpha_{k*}}(K_0(C(S^1\times\aT)) & = [{\bf 0}] ,\nonumber\\
{\rm coker}_{1-\alpha_{k*}}(K_0(C(S^1\times\aT)) & = \ZZ[{\bf 1}]\oplus \ZZ[P_{ac}].\nonumber
\end{align}

Next, $K^1(S^1\times\aT)\cong K_1(C(S^1\times\aT))=\ZZ[W]\oplus\ZZ[U]$, and we recall from \eqref{actiondefiningheisenberg2} that $\alpha_k(U^aW^c)=U^aW^{c-ka}$. Therefore,
\begin{align}
{\rm ker}_{1-\alpha_{k*}}(K_1(C(S^1\times\aT)) & = \ZZ[W] ,\nonumber\\
{\rm Im}_{1-\alpha_{k*}}(K_1(C(S^1\times\aT)) & = \ZZ[W^k] ,\nonumber\\
{\rm coker}_{1-\alpha_{k*}}(K_1(C(S^1\times\aT)) & = \ZZ_k[W]\oplus\ZZ[U].\nonumber
\end{align}

Thus, for $\bullet=0$, the PV sequence is
\begin{equation}
0\rightarrow \ZZ[{\bf 1}]\oplus\ZZ[P_{ac}]\rightarrow \ZZ[{\bf 1}]\oplus\ZZ[P_{ac}]\oplus \ZZ[P_{bc}]\xrightarrow{\partial} \ZZ[W]\rightarrow 0,\label{heisenbergPVsequence}
\end{equation}
with $\partial$ taking $[P_{bc}]\mapsto -[W]$ and annihilating $[\bf 1]$ and $[P_{ac}]$. Equivalently, with $$[\wt{P}_{ic}]\coloneqq [P_{ic}]-[{\bf 1}],\quad i=a,b,$$ we can write
\begin{equation}
\partial([\wt{P}_{bc}])=-[W],\qquad\partial([\wt{P}_{ac}])=0=\partial([{\bf 1}]).\nonumber
\end{equation}

For $\bullet=1$, the PV sequence (\cite{Pimsner})  is 
\begin{equation}
0\rightarrow \ZZ_k[W]\oplus\ZZ[U]\rightarrow \ZZ_k[W]\oplus\ZZ[U]\oplus \ZZ[V]\oplus\ZZ[V_a] \xrightarrow{\partial} \ZZ[1]\oplus\ZZ[P_{ac}]\rightarrow 0,\nonumber
\end{equation}
where $V_a$ is a unitary constructed as in pp.\ 7 of Ref.\ \cite{AP} (it is roughly the analogue of the top class in $H^3(\TT^3)$). The boundary map takes $[V]\mapsto -[{\bf 1}],\, [V_a]\mapsto[P_{ac}]$, and annihilates $[W]$ and $[U]$.

\subsubsection{Circle bundle T-duality $T_1$}
Viewing $S^1\times\TT^2$ as a circle bundle over $\abT$, we use the results of \cite{BEM,BEM2} to deduce that $T_1$, the T-duality isomorphism for circle bundles, maps \mbox{$K^0(S^1\times\TT^2, kH_1)$} to $K^1(\nil_k)$ as follows:
\begin{align}
T_1([dx_c\wedge dx_a]) & = [dx_a]\nonumber\\
T_1([dx_c\wedge dx_b]) & = [dx_b]\nonumber\\
T_1([dx_a\wedge dx_b]) & = [dx_a\wedge dx_b \wedge \widehat A].\nonumber
\end{align}
From this, we deduce that on the (reduced) $K$-theory classes,
\begin{align}
T_1([\wt{\cL}_{ca}]) & = [U_a]\nonumber\\
T_1([\wt{\cL}_{cb}])  & = [U_b]\nonumber\\
T_1([\wt{\cL}_{ab}]) & = [Y],\nonumber
\end{align}
giving a complete description of the isomorphism $T_1$ on the level of generators.

\subsubsection{Baum--Connes map $T_2$}
The map $T_2$ can be identified with the Baum--Connes assembly map \cite{BCH} (together with Poincar\'{e} duality) for the discrete group $\heisZ(k)$, which is known to be an isomorphism, see Proposition 2.5 in \cite{Rosenberg2},
\beq
\mu : K^1(\nil_k) \to K_0(C^*(\heisZ(k))).\nonumber
\eeq
In fact, Rosenberg shows that this Baum--Connes map can deduced from the Baum--Connes map for the torus $\TT^2$ (which is essentially the Fourier transform) via a
PV sequence (\cite{Pimsner}, often used in our paper) and the 5-Lemma. 

Upon applying the von Neumann trace $\tau: C^*(\heisZ(k)) \to \CC$, we see that 
\beq
\tau(\mu(Y)) = \int_{\nil_k} Ch^{odd}(Y) = 1.\nonumber
\eeq
Note also that $$\tau(\mu(U_a))=0=\tau(\mu(U_b)).$$

Now there are two nontrivial line bundles $\cP_{ac}$ and $\cP_{bc}$ over $\nil_k$, which are explicitly constructed in Appendix \ref{section:linebundlesovernil}. Their respective Chern classes are $[dx_a\wedge \widehat A]$ and $[dx_b \wedge \widehat A]$, which themselves correspond to group 2-cocycles
$\sigma_{ac}$ and $\sigma_{bc}$ on the discrete group $\heisZ(k)$. These give rise to cyclic 2-cocycles $\tau_{ac}$ and $\tau_{bc}$ 
on the smooth subalgebra of $C^*(\heisZ(k))$ \cite{CM} (see Appendix \ref{section:cycliccocycles}), and hence define maps\footnote{When $k\neq 1$, there is also a nontrivial line bundle $\cP_{ab}$ over $\nil_k$ whose (integral) Chern class is $k$-torsion. It is obtained from pulling back $\nil_1\rightarrow \TT^2$ along the bundle projection $\nil_k\rightarrow\TT^2$. We do not get another cyclic 2-cocycle from this.} $K_0(C^*(\heisZ(k)))\rightarrow \ZZ$.

We then compute, using the Connes--Moscovici higher index theorem \cite{CM},
\beq \label{cyclic1}
\tau_{ac}(\mu(U_b)) = \int_{\nil_k} dx_a \wedge \widehat A \wedge Ch^{odd}(U_b) =  \int_{\nil_k} dx_a \wedge \widehat A \wedge dx_b = -1\nonumber
\eeq
and 
\beq \label{cyclic2}
\tau_{bc}(\mu(U_a)) = \int_{\nil_k} dx_b \wedge \widehat A \wedge Ch^{odd}(U_a) =  \int_{\nil_k} dx_b \wedge \widehat A \wedge dx_a = 1.\nonumber
\eeq

On the other hand, Proposition \ref{GCT} says that
\begin{align}
\tau_{ac}(P_{ac}) &= 1, \;\; \tau_{ac}(P_{bc})=0,\;\;\tau_{ac}({\bf 1})=0  \nonumber \\
\tau_{bc}(P_{bc}) &= 1, \;\; \tau_{bc}(P_{ac})=0,\;\;\tau_{bc}({\bf 1})=0 \nonumber\\ 
\tau(P_{ac}) &= \tau(P_{bc}) = \tau({\bf 1}) = 1.\nonumber
\end{align}
This implies that 
\begin{align}
[\mu(Y)] &= [{\bf 1}] \nonumber \\
[\mu(U_a)] &= [\wt{P}_{bc}]  \nonumber\\ 
[\mu(U_b)] & = - [\wt{P}_{ac}] .\nonumber
\end{align}
That is,
\begin{align}
T_2([Y]) & = [{\bf 1}]\nonumber\\
T_2([U_a])  & = [\wt{P}_{bc}]\nonumber\\
T_2([U_b]) & = -[\wt{P}_{ac}].\nonumber
\end{align}

\subsubsection{Torus bundle T-duality $T = T_2\circ T_1$}
We deduce that
\begin{align}
T([\wt{\cL}_{ca}]) & = [\wt{P}_{bc}] \nonumber\\
T([\wt{\cL}_{cb}])  & = -[\wt{P}_{ac}]\nonumber\\
T([\wt{\cL}_{ab}]) & = [{\bf 1}],\nonumber
\end{align}
giving a complete description of $T$ on the level of generators.

\subsubsection{Restriction maps $\iota^*$}
We have $K^0(S^1\times\TT)\cong \ZZ[{\bf 1}]\oplus \ZZ[\wt{\cL}_{ca}]$, and the restriction map on $K$-theory induced by $\iota:S^1\times\TT\rightarrow S^1\times\TT^2 \rightarrow$ is easy to compute: 
\begin{align}
\iota^*([\wt{\cL}_{ca}]) & =  [\wt{\cL}_{ca}]\nonumber\\
\iota^*([\wt{\cL}_{cb}]) & = [{\bf 0}]\nonumber\\
\iota^*([\wt{\cL}_{ab}]) & =  [{\bf 0}].\nonumber
\end{align}

For the other inclusion map $\iota:S^1\times\TT=\acT\rightarrow \nil_k$ (see \eqref{eqn:nilastorusbundle}), the restriction map $\iota^*$ in $K$-theory is simply
\begin{align}
\iota^*[Y]&=[{\bf 0}]\nonumber\\
\iota^*[U_a]&=[U_a]\nonumber\\
\iota^*[U_b]&=[{\bf 0}].
\end{align}

\subsubsection{Commutative T-duality maps $T_a, T_c, T_{ac}$}
The map $$T_a:K^0(S^1\times\TT)\longrightarrow K^1(S^1\times\TT)\cong K_1(C(S^1\times\TT))$$ in \eqref{BB} refers to the 1D commutative T-duality map with respect to $\TT=\aT$. It takes $$T_a:[{\bf 1}]\mapsto-[U],\qquad[\wt{\cL}_{ca}]\sim[dx_c\wedge dx_a]\mapsto-[dx_c]\sim-[W]$$ (the parametrising circle $S^1=\cT$ goes along for the ride).

Similarly, the map $$T_c:K^0(S^1\times\TT)\longrightarrow K^1(S^1\times\TT)$$ in \eqref{eqn:factorization2} refers to the 1D commutative T-duality map with respect to $S^1=\cT$. It takes $$T_c:[\wt{\cL}_{ca}]\mapsto[U_a],\qquad[{\bf 1}]\mapsto-[U_c].$$

Finally, the full T-duality map $$T_{ac}:K^1(S^1\times\TT)\longrightarrow K^1(S^1\times\TT)$$ in \eqref{eqn:factorization2} is the composition $T_a\circ T_c^{-1}$, which takes $$T_{ac}:[U_a]\mapsto-[W],\qquad[U_c]\mapsto-[U].$$

\subsection{Assembling $K$-theory maps together}\label{section:mainproof}

\begin{proof}[Proof of Theorem \ref{thm:main}: T-duality trivializes the bulk-boundary correspondence, first version]\label{mainproof}
We can now verify the commutativity of the diagram \eqref{BB}. 

\begin{align}
T_a\circ\iota^*([\wt{\cL}_{ca}]) & = T_a([\wt{\cL}_{ca}])\nonumber \\
& =-[W]\qquad \nonumber\\
& = \partial([\wt{P}_{bc}])\nonumber\\
& = \partial\circ T([\wt{\cL}_{ca}]),\nonumber
\end{align}
and 
\begin{align}
T_a\circ\iota^*([\wt{\cL}_{cb}]) & = [{\bf 0}] = \partial[\wt{P}_{ac}]=\partial\circ T([\wt{\cL}_{cb}])\nonumber\\
T_a\circ\iota^*([\wt{\cL}_{ab}]) & = [{\bf 0}] = \partial[{\bf 1}]=\partial\circ T([\wt{\cL}_{ab}]),\nonumber
\end{align}
so $T_a\circ\iota^*=\partial\circ T$.
\end{proof}

\begin{proof}[Proof of Theorem \ref{thm:second}: T-duality trivializes the bulk-boundary correspondence, second version]\label{secondproof}
We can also verify the commutativity of the diagram \eqref{BB2}. 
\begin{align}
T_{ac}\circ\iota^*([U_a]) & = T_{ac}([U_a])\nonumber \\
& = -[W]\qquad \nonumber\\
& = \partial([\wt{P}_{bc}])\nonumber\\
& = \partial\circ T_2([U_a]),\nonumber
\end{align}
and 
\begin{align}
T_{ac}\circ\iota^*([U_b]) & = [{\bf 0}] = \partial(-[\wt{P}_{ac}])=\partial\circ T_2([U_b])\nonumber\\
T_{ac}\circ\iota^*([Y]) & = [{\bf 0}] = \partial[{\bf 1}]=\partial\circ T_2([Y]),\nonumber
\end{align}
so $T_{ac}\circ\iota^*=\partial\circ T_2$.
\end{proof}
Alternatively, Theorem \ref{thm:second} follows directly from Theorem \ref{thm:main}, the easily verified commutativity of 
\beq
\xymatrix{
K^0(S^1\times \TT^2,\, k H_1)  \ar[d]^{\iota^*} \ar[r]^{\qquad\sim}_{\qquad T_1} & K^1(\nil_k) \ar[d]^{\iota^*} \\
K^0(S^1 \times \TT) \ar[r]^{\sim}_{T_c} & K^1(S^1 \times \TT)  }  \nonumber
\eeq
and the factorisations $T_2=T\circ T_1^{-1}$ and $T_{ac}=T_a\circ T_c^{-1}$; in other words, the combined diagram \eqref{BBfull} commutes.

\begin{remark}
For the degree-shifted versions of \eqref{BB} and \eqref{BB2}, a torsion subgroup $\ZZ_k$ appears in the two $K$-theory groups on the top rows (when $k\neq 1$). We can ignore these for the commutativity of \eqref{BB} since the $K$-theory groups in the bottom rows are torsion-free; the use of cyclic theory continues to work in this case, but we have left out the explicit computations.
\end{remark}

\section{Topological insulators with screw dislocations and the Heisenberg group}\label{section:screw}
In \cite{RZV}, it was proposed that screw dislocations (Fig.\ \ref{fig:screw}) in a 3D time-reversal invariant topological insulator can can host topologically protected modes traversing the bulk bandgap. This phenomenon was demonstrated numerically in a tight-binding model, where a pair of screw dislocations was introduced into a unit cell subjected to periodic boundary conditions. A subsequent investigation into such dislocation-bound 1D modes can be found in \cite{ITT}.

Prior to these papers, the effect of a single screw dislocation on Landau levels was studied in \cite{Furtado}, using a differential-geometric theory of defects as described in \cite{BBS,KV,Kleinert}. The latter framework is very general, and allowed the authors of \cite{SF} to consider the quantum dynamics of a free particle in the presence of a cylindrically symmetric \emph{distribution} of parallel screw dislocations. It was found that the energy levels were quantized to \emph{elastic Landau levels} reminiscent of those of a charged particle in the presence of a uniform magnetic field.

In general, defects such as screw dislocations break the $\ZZ^3$ translation symmetry of the original 3D Euclidean lattice, which is needed to define the Fu--Kane--Mele $\ZZ_2$-invariants \cite{FKM} characterizing a 3D time-reversal invariant topological insulator. This raises the question as to whether the standard ``undeformed'' Fu--Kane--Mele invariants continue to be appropriate. Furthermore, the notion of a unit cell in $\RR^3$, used in \cite{RZV}, also requires this symmetry to be well-defined. We can circumvent this general difficulty by requiring the defects themselves to be distributed in a sufficiently regular manner (this is implicit in the imposition of periodic boundary conditions in \cite{RZV,ITT}). This regularity allows us to define topological invariants which are, in a precise sense, deformed versions of the usual ones arising from (commutative) Bloch theory. We explain this deformation in the simpler case without time-reversal symmetry, leaving the time-reversal invariant case for a subsequent work.

\begin{figure}[h]
    \centering
    \includegraphics[width=0.95\textwidth]{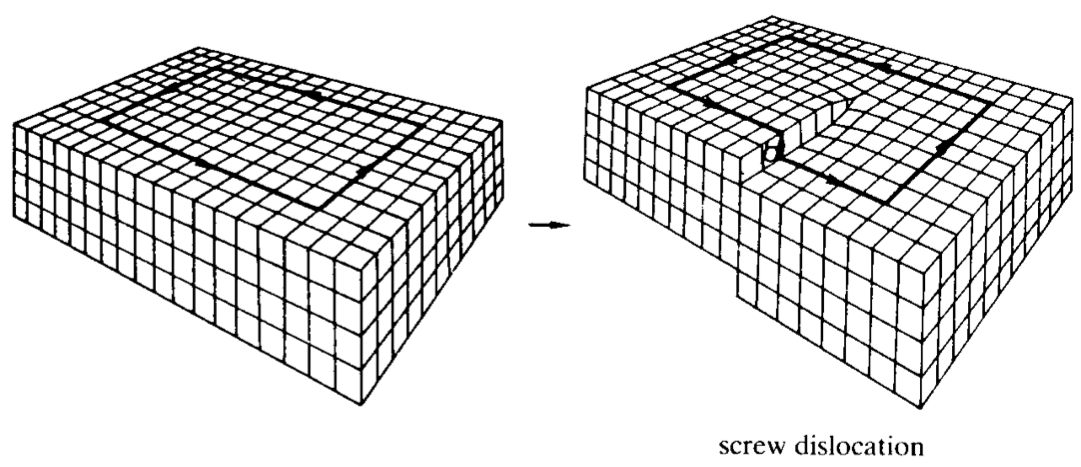}
    \caption{An elementary screw dislocation (Source: pp.\ 786 in \cite{Kleinert}) with Burgers vector in the vertical direction. A circuit of translations in the horizontal direction enclosing the dislocation ends at a lattice site which differs from the starting point by a vertical translation.}
    \label{fig:screw}
\end{figure}

Consider at first a 3D lattice of atomic sites in $\RR^3$ with lattice translations along axes labelled by $\aR, \bR, \cR$. Then introduce parallel elementary screw dislocations along the $\cR$-direction, which are distributed uniformly so that the dislocations are located on a 2D lattice when projected onto the $\aR$-$\bR$ plane. Let $U$ and $V$ be generating translations for this latter lattice in the $a$ and $b$ directions respectively, taken along the atomic bonds in the \emph{distorted} 3D lattice. Then $U$ and $V$ no longer commute, but instead obey $UV=WVU$ with $W$ being a translation in the $c$-direction by one atom --- these are the defining commutation relations for the integer Heisenberg group. Thus one may consider for instance, a tight-binding Hamiltonian on the above distorted 3D lattice symmetric under ``translations'' by $\heisZ$ as above.

\subsection{$\heisZ$-symmetric Hamiltonians and the noncommutative Brillouin zone}
Ordinary lattices in $\RR^3$ have an abelian group $\ZZ^3$ of translation symmetries which allows the use of Bloch theory for defining topological invariants of gapped Hamiltonians with these symmetries. In this commutative case, the Brillouin zone is a 3-torus which is the Pontryagin dual of $\ZZ^3$, and the topological invariants live in various $K$-theory groups of this torus \cite{FM}.

A noncommutative generalization is needed for the integer quantum Hall effect, where the \emph{magnetic} translation symmetries do not commute but instead generate a noncommutative torus \cite{Bellissard}, which is a deformation of the ordinary 2-torus. The analysis of Hamiltonians symmetric under $\heisZ$ follows along much the same lines. Namely, the ``noncommutative Brillouin zone'' is the integer Heisenberg group $C^*$-algebra $C^*(\heisZ)$, which as we have already seen in Section \ref{section:Heisenberggeneralities}, is a parametrised deformation of the ordinary 3-torus.

In the ordinary commutative case, the 3-torus has three independent first Chern classes, represented by the Bott projections $P_{ac}, P_{bc}, P_{ab}$ which are each non-trivial on one of the three independent choices of 2-subtori. Each of these Chern classes corresponds to a 3D Chern insulator which can be thought of as layers of a standard 2D Chern insulator. The quantum Hall state in 3D is again characterised by three integers \cite{KHW}, which come from the noncommutative Chern classes for the 3D noncommutative torus.

However, when we consider $\heisZ$ symmetry, the spatial direction along $\cR$ is singled out, and we see its signature in the ``disappearance'' of the transverse Bott projection $P_{ab}$ from the generators of the $K$-theory of $C^*(\heisZ)$ calculated in Section \ref{section:Heisenbergalgebraktheory}. Furthermore, the bulk-boundary homomorphism $\partial$ maps only onto $\ZZ[W]\in K_1(C(S^1\times\aT))=K_1(C^*(\acZ))$ but not in $\ZZ[U]$ (if we took the boundary to be transverse to the $a$-direction, then $\partial$ will again land in $\ZZ[W]$ but not in $\ZZ[V]$). Recall that $W$ is the generator of translations in the $c$-direction. This suggests that edge modes propagate only in the $c$-direction, which is consistent with the intuition in \cite{RZV,ITT} that topologically protected 1D propagating modes develop along screw dislocations.

\begin{remark}
The continuum limit of a uniform distribution of parallel screw dislocations was studied in \cite{CK}, leading to the real Heisenberg group manifold $\heisR$ (which is topologically still $\RR^3$) as the so-called \emph{material manifold}. The discrete subgroup $\heisZ$ is a lattice in $\heisR$, and the quotient nilmanifold $\heisR/\heisZ$ is the appropriate fundamental domain, or ``Wigner--Seitz cell'', to use here.

The restriction map in \eqref{BB2}, which is the T-dualized bulk-boundary homomorphism, now has a direct interpretation: it is a simple restriction of $K$-theory classes from the deformed bulk fundamental domain (a nilmanifold) to the boundary fundamental domain (a torus).
\end{remark}

\appendix
\section{Cyclic cocycles on integer Heisenberg group algebra and pairing with $K$-theory}
\subsection{Line bundles over $\nil_k$}\label{section:linebundlesovernil}
The group cohomology $H^2_{\mathrm{group}}(\heisZ(k);\ZZ)\cong H^2(\nil_k;\ZZ)=\ZZ^2\oplus\ZZ_k$ is generated by the following central extensions of $\heisZ(k)$ by $\ZZ$.
First, there is
\begin{equation}
\widetilde{\heisZ(k)}^{bc}\coloneqq\left\{\begin{pmatrix}1 & a & \frac{c}{k} & \frac{d}{2k} \\ 0 & 1 & b & \frac{1}{2}b^2 \\ 0 & 0 & 1 & b \\ 0 & 0 & 0 & 1\end{pmatrix}\Big|\, a,b,c,d\in\ZZ\right\}\nonumber
\end{equation}
with quotient map onto $\heisZ$ being restriction to the upper left $3\times 3$ submatrix. Second, there is
\begin{equation}
\widetilde{\heisZ(k)}^{ac}\coloneqq\left\{\begin{pmatrix}1 & a & \frac{1}{2}a^2 & \frac{d}{2k} \\ 0 & 1 & a & \frac{c}{k} \\ 0 & 0 & 1 & b \\ 0 & 0 & 0 & 1\end{pmatrix}\Big|\, a,b,c,d\in\ZZ\right\}\nonumber
\end{equation}
with quotient map being restriction to the lower right $3\times 3$ submatrix.
Third, there is
\begin{equation}
\widetilde{\heisZ(k)}^{ab}\coloneqq\left\{\begin{pmatrix}1 & a & \frac{c}{k} & d \\ 0 & 1 & b & b \\ 0 & 0 & 1 & 0 \\ 0 & 0 & 0 & 1\end{pmatrix}\Big|\, a,b,c,d\in\ZZ\right\}.\nonumber
\end{equation}
with quotient map being restriction to the upper left $3\times 3$ submatrix.

Their group cocycles are, respectively,
\begin{align}
-\sigma_{bc}((a_1,b_1,c_1),(a_2,b_2,c_2)) &= 2c_1b_2+ka_1b_2^2\sim -b_1c_2+c_1b_2-ka_1b_1b_2\nonumber\\
\sigma_{ac}((a_1,b_1,c_1),(a_2,b_2,c_2)) &=2a_1c_2+ka_1^2b_2\sim a_1c_2-c_1a_2-ka_1a_2b_2\nonumber\\
\sigma_{ab}((a_1,b_1,c_1),(a_2,b_2,c_2)) &=a_1b_2.\nonumber
\end{align}
where $\sim$ means ``cohomologous to''. The cocycles $\sigma_{bc}$ and $\sigma_{ac}$ are the free generators, whereas one can check that $k\cdot\sigma_{ab}$ is a coboundary so that $\sigma_{ab}$ generates $\ZZ_k$. Furthermore, we observe that $\sigma_{bc}$ vanishes on $\acZ$ while $\sigma_{ac}$ vanishes on $\bcZ$.

These constructions also give rise to central extensions of $\heisR$ by $\RR$ by allowing $a,b,c,d\in\RR$. The non-trivial line bundles over $\nil_k=B\heisZ(k)$ are obtained by quotienting these real extensions by the discrete ones.

\subsection{Cyclic cocycles from group cocycles}\label{section:cycliccocycles}
According to Connes--Moscovici \cite{CM}, we can construct cyclic 2-cocycles $\tau_{bc}, \tau_{ac}$ on (a smooth subalgebra of) $C^*(\heisZ(k))$ from the group 2-cocycles $\sigma_{bc}, \sigma_{ac}$ on $\heisZ(k)$. Explicitly, $\tau_{bc}$ is
\begin{equation}
\tau_{bc}(f_0,f_1,f_2)=\sum_{\substack{\gamma_0\gamma_1\gamma_2=\mathrm{id}\\\gamma_0,\gamma_1,\gamma_2\in\heisZ(k)}}\sigma_{bc}(\gamma_1,\gamma_2)f_0(\gamma_0)f_1(\gamma_1)f_2(\gamma_2),\qquad f_i\in \CC(\heisZ(k)),\label{cyclicfromgroupcocycle}
\end{equation}
extended to the smooth subalgebra of $C^*(\heisZ(k))$, and similarly for $\tau_{ac}$. These cyclic cocycles then pair with $K_0(C^*(\heisZ(k)))$ in the usual way, by extending the formula \eqref{cyclicfromgroupcocycle} to smooth matrix algebras over $\CC(\heisZ(k))$,
\begin{equation}
\tau_{bc}(f_0\otimes A_0,f_1\otimes A_1,f_2\otimes A_2)=\mathrm{tr}(A_0A_1A_2)\sum_{\substack{\gamma_0\gamma_1\gamma_2=\mathrm{id}\\\gamma_0,\gamma_1,\gamma_2\in\heisZ(k)}}\sigma_{bc}(\gamma_1,\gamma_2)f_0(\gamma_0)f_1(\gamma_1)f_2(\gamma_2),\label{cyclicfromgroupcocyclematrix}
\end{equation}
and taking $$\langle\tau_{bc},[P] \rangle=\tau_{bc}(P,P,P),\qquad [P]\in K_0(C^*(\heisZ(k))).$$ The aim is to show that $\tau_{bc}, \tau_{ac}$ are linearly independent and non-zero. Note that $\sigma_{ab}$ is torsion, so we do not get another cyclic cocycle $\tau_{ab}$ from it.

Corresponding to the two commutative subalgebras $C^*(\bcZ), C^*(\acZ)$, there are two projections $P_{bc}, P_{ac}$ (in their matrix algebras) which are the Bott projections $P_\mathrm{Bott}$ when $C^*(\bcZ)$ and $C^*(\acZ)$ are identified with $C(\TT^2)$. As an element of $C^*(\heisZ(k))$, $P_{bc}$ can be thought of as the function $\aZ\rightarrow C^*(\bcZ)$ which is $P_\mathrm{Bott}$ when $a=0\in\aZ$ and zero otherwise; similarly for $P_{ac}$.

Recall that there is also a standard cyclic 2-cocycle $\psi$ on $C^\infty(\TT^2)$, defined by
\begin{equation}
\psi(f,g,h)=2\pi i\,\tau(f[\partial_1g,\partial_2h]),\qquad f,g,h\in C^\infty(\TT^2),\nonumber
\end{equation}
where $\tau(\sum_{m,n\in\ZZ}a_{m,n}U^mW^n)=a_{0,0}$ with $U,W$ the two commuting unitaries generating $C^\infty(\TT^2)$. The derivations are such that $\partial_1(U)=U, \partial_2{W}=W$, and $\partial_1(W)=0=\partial_2(U)$. It is known that the pairing of $\psi$ with $K_0(C(\TT^2))$ is given by 
\begin{equation}
\langle \psi,[P_\mathrm{Bott}] \rangle=1,\qquad\langle \psi,[{\bf 1}]\rangle=0,\nonumber
\end{equation}
and that $\tau(P_\mathrm{Bott})=1=\tau({\bf 1})$. 

We can rewrite the formula for $\psi$ in a form which resembles \eqref{cyclicfromgroupcocycle}. Let 
\begin{align}
f &=\sum_{m,n\in\ZZ}f_{m,n}U^mW^n,\nonumber\\
g &=\sum_{p,q\in\ZZ}g_{p,q}U^pW^q,\nonumber\\
h &=\sum_{r,s\in\ZZ}h_{r,s}U^rW^s,\nonumber
\end{align}
then
\begin{equation}
\psi(f,g,h)=\sum_{p,q,r,s\in\ZZ}(ps-rq)\cdot f_{-p-r,-q-s}g_{p,q}h_{r,s},\label{intermediatepsi}
\end{equation}
which is then extended as in \eqref{cyclicfromgroupcocyclematrix} to a pairing with $K_0(C(\TT^2))$. Let $\gamma_0\equiv(m,n), \gamma_1\equiv(p,q),\gamma_2\equiv(r,s)$ be elements of $\ZZ^2$, so $c(\gamma_1,\gamma_2)=ps-rq$ defines a 2-cocycle on $\ZZ^2\times\ZZ^2$. Equation \eqref{intermediatepsi} can be rewritten as
\begin{equation}
\psi(f,g,h)=\sum_{\substack{\gamma_0\gamma_1\gamma_2=\mathrm{id}\\\gamma_0,\gamma_1,\gamma_2\in\ZZ^2}}c(\gamma_1,\gamma_2)f(\gamma_0)g(\gamma_1)h(\gamma_2),\nonumber
\end{equation}
and the statement that $\langle \psi,[P_\mathrm{Bott}]\rangle=1$ becomes
\begin{equation}
\langle \psi,[P_\mathrm{Bott}]\rangle =\sum_{\substack{\gamma_0\gamma_1\gamma_2=\mathrm{id}\\\gamma_0,\gamma_1,\gamma_2\in\ZZ^2}}c(\gamma_1,\gamma_2)\,\mathrm{tr}\left(P_\mathrm{Bott}(\gamma_0)P_\mathrm{Bott}(\gamma_1)P_\mathrm{Bott}(\gamma_2)\right)=1.\label{pairingwithBott}
\end{equation}

\subsection{Pairing with $K$-theory}
We can now compute the pairing of $\tau_{bc}$ with $[P_{ac}]$ and $[P_{bc}]$:
\begin{align}
\tau_{bc}(P_{ac},P_{ac},P_{ac}) &=\sum_{\substack{\gamma_0\gamma_1\gamma_2=\mathrm{id}\\ \gamma_0,\gamma_1,\gamma_2\in\heisZ(k)}}\sigma_{bc}(\gamma_1,\gamma_2)\,\mathrm{tr}\left(P_{ac}(\gamma_0)P_{ac}(\gamma_1)P_{ac}(\gamma_2)\right)\nonumber\\
&= \sum_{\substack{\gamma_0\gamma_1\gamma_2=\mathrm{id}\\ \gamma_0,\gamma_1,\gamma_2\in\acZ}}\sigma_{bc}(\gamma_1,\gamma_2)\,\mathrm{tr}\left(P_{ac}(\gamma_0)P_{ac}(\gamma_1)P_{ac}(\gamma_2)\right),\nonumber
\end{align}
since $P_{ac}$ is supported on $\acZ$. But $\sigma_{bc}$ restricted to $(\gamma_1,\gamma_2)\in\acZ\times\acZ$ vanishes, as we had found at the end of the previous subsection, so we conclude that
\begin{equation}
\langle\tau_{bc},[P_{ac}]\rangle\equiv\tau_{bc}(P_{ac},P_{ac},P_{ac})=0.\nonumber
\end{equation}
Next, 
\begin{align}
\langle\tau_{bc},[P_{bc}]\rangle\equiv\tau_{bc}(P_{bc},P_{bc},P_{bc}) &=\sum_{\substack{\gamma_0\gamma_1\gamma_2=\mathrm{id}\\ \gamma_0,\gamma_1,\gamma_2\in\heisZ(k)}}\sigma_{bc}(\gamma_1,\gamma_2)\,\mathrm{tr}\left(P_{bc}(\gamma_0)P_{bc}(\gamma_1)P_{bc}(\gamma_2)\right)\nonumber\\
&= \sum_{\substack{\gamma_0\gamma_1\gamma_2=\mathrm{id}\\ \gamma_0,\gamma_1,\gamma_2\in\bcZ}}\sigma_{bc}(\gamma_1,\gamma_2)\,\mathrm{tr}\left(P_{bc}(\gamma_0)P_{bc}(\gamma_1)P_{bc}(\gamma_2)\right)\nonumber\\
&= \sum_{\substack{\gamma_0\gamma_1\gamma_2=\mathrm{id}\\ \gamma_0,\gamma_1,\gamma_2\in\ZZ^2}}c(\gamma_1,\gamma_2)\,\mathrm{tr}\left(P_\mathrm{Bott}(\gamma_0)P_\mathrm{Bott}(\gamma_1)P_\mathrm{Bott}(\gamma_2)\right)\nonumber\\
&= 1,\nonumber
\end{align}
where the second equality follows from the fact that $P_{bc}$ is supported on $\bcZ$ (where it is $P_\mathrm{Bott}$), the third equality follows from $\sigma_{bc}|_{\bcZ\times\bcZ}=c$, and the fourth equality is \eqref{pairingwithBott}.

To summarize,

\begin{proposition}\label{GCT}
\begin{equation}
\langle\tau_{bc},[P_{bc}]\rangle=1,\qquad \langle\tau_{bc},[P_{ac}]\rangle=0,\qquad
\langle\tau_{bc},[{\bf 1}]\rangle=0.\nonumber
\end{equation}
In a similar vein, we also obtain
\begin{equation}
\langle\tau_{ac},[P_{ac}]\rangle=1,\qquad \langle\tau_{ac},[P_{bc}]\rangle=0,\qquad
\langle\tau_{ac},[{\bf 1}]\rangle=0.\nonumber
\end{equation}
It is known, from \cite{AP,Ko} for instance, that $K_0(C^*(\heisZ(k)))\cong\ZZ^3$ is generated by $[P_{ac}], [P_{bc}]$, and the trivial projection $[{\bf 1}]$ (the identity). We can define the 0-cocycle $\tau$ on the smooth subalgebra of $C^*(\heisZ(k))$ by $\tau(\sum_{r,s,t\in\ZZ}a_{r,s,t}U^rV^sW^t)=a_{0,0,0}$, whence we see that
\begin{equation}
\langle\tau,[P_{ac}]\rangle=\langle\tau,[P_{bc}]\rangle=\langle\tau,[{\bf 1}]\rangle=1.\nonumber
\end{equation}
Thus $\tau_{bc}, \tau_{ac}$ are linearly independent, and together with $\tau$, can be used to distinguish elements of $K_0(C^*(\heisZ(k)))$ uniquely.
\end{proposition}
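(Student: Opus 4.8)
The plan is to collect the pairing computations already performed for $\tau_{bc}$, transport them to $\tau_{ac}$ by the symmetry exchanging the $a$- and $b$-generators, dispose of the trivial projection by a support argument, and then package the six resulting numbers into an invertible integer matrix from which linear independence and the separation of $K_0$-classes follow immediately.

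First I would record that $\langle\tau_{bc},[P_{bc}]\rangle=1$ and $\langle\tau_{bc},[P_{ac}]\rangle=0$ are precisely the two computations displayed just before the statement: in the first, $P_{bc}$ is supported on $\bcZ$, where it coincides with $P_\mathrm{Bott}$ and where $\sigma_{bc}$ restricts, up to a coboundary (which does not change the class of the associated cyclic cocycle), to the standard $\ZZ^2$-cocycle $c$, so the sum in \eqref{cyclicfromgroupcocyclematrix} collapses to \eqref{pairingwithBott}; in the second, $P_{ac}$ is supported on $\acZ$, on which $\sigma_{bc}$ vanishes identically. For $\langle\tau_{bc},[{\bf 1}]\rangle$ I would observe that the trivial projection is the function supported at $\mathrm{id}\in\heisZ(k)$, so the only surviving term in \eqref{cyclicfromgroupcocycle} has $\gamma_0=\gamma_1=\gamma_2=\mathrm{id}$ and carries the factor $\sigma_{bc}(\mathrm{id},\mathrm{id})=0$; hence the pairing is zero. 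The three identities for $\tau_{ac}$ follow from the same three arguments after interchanging the roles of $a$ and $b$, using the facts recorded at the end of Section~\ref{section:cycliccocycles} that $\sigma_{ac}$ vanishes on $\bcZ$ and restricts on $\acZ$ to a cocycle cohomologous to $c$.

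Next I would recall from \cite{AP,Ko} (reproved via the PV sequence in Section~\ref{subsection:PVboundary}) that $K_0(C^*(\heisZ(k)))\cong\ZZ^3$ with basis $[{\bf 1}],[P_{ac}],[P_{bc}]$. The $0$-cocycle $\tau$ extracts the coefficient of $U^0V^0W^0$, so $\tau(P)$ is the constant term of $P$; since each of $P_{ac}$ and $P_{bc}$ is a copy of $P_\mathrm{Bott}$ inside a subalgebra isomorphic to $C(\TT^2)$, with $\tau(P_\mathrm{Bott})=1$, we get $\tau({\bf 1})=\tau(P_{ac})=\tau(P_{bc})=1$ at once. Finally I would assemble the $3\times 3$ matrix of pairings of $\{\tau,\tau_{ac},\tau_{bc}\}$ against $\{[{\bf 1}],[P_{ac}],[P_{bc}]\}$; it is upper triangular with $1$'s on the diagonal, hence unimodular, which shows that these three functionals jointly separate points of $K_0(C^*(\heisZ(k)))$ and that $\tau_{ac},\tau_{bc}$ are linearly independent (their $2\times 2$ block against $[P_{ac}],[P_{bc}]$ is already the identity).

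The only step requiring genuine verification rather than bookkeeping is the identification of $\sigma_{bc}|_{\bcZ\times\bcZ}$ and $\sigma_{ac}|_{\acZ\times\acZ}$, up to coboundary, with the standard symplectic cocycle $c$ on $\ZZ^2$ underlying $\psi$. I would do this by direct comparison of the explicit formulas of Section~\ref{section:linebundlesovernil}, exhibiting the difference as the coboundary of an explicit $1$-cochain after a suitable reindexing of the two $\ZZ$ factors, so that the induced cyclic $2$-cocycles agree in periodic cyclic cohomology and their pairings with $K_0$ coincide with $\langle\psi,[P_\mathrm{Bott}]\rangle$.
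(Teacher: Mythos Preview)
Your proposal is correct and follows essentially the same approach as the paper: the paper computes $\langle\tau_{bc},[P_{ac}]\rangle$ and $\langle\tau_{bc},[P_{bc}]\rangle$ exactly as you describe (support on $\acZ$ where $\sigma_{bc}$ vanishes, respectively support on $\bcZ$ where the restriction reduces to the standard symplectic cocycle and \eqref{pairingwithBott} applies), then invokes symmetry for $\tau_{ac}$ and the constant-term trace for $\tau$. Your additional explicit treatment of $\langle\tau_{bc},[{\bf 1}]\rangle$ via the support-at-identity argument, your packaging of the pairings into a unimodular matrix, and your remark that the restriction agrees with $c$ only up to coboundary are natural elaborations of points the paper leaves implicit, not a different route.
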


\begin{remark}
Hadfield \cite{Hadfield} also studies these pairings in the $k=1$ case, however his calculations there are not complete, and do not exploit the simplifications that we do.

\end{remark}

\bigskip
\noindent{\it Acknowledgements}. This work was supported by the Australian Research Council via ARC Discovery Project grants 
DP150100008 and DP130103924.



\end{document}